\documentclass[draftcls, onecolumn, 12pt]{IEEEtran}
\usepackage{amsmath,amssymb,threeparttable,placeins, enumerate, caption}
\usepackage{mathtools,relsize}
\usepackage[pdftex]{graphicx}
\usepackage{epstopdf}
\usepackage{color}
\usepackage[usenames]{xcolor}
\usepackage{amsfonts}
\usepackage{latexsym}
\usepackage{amsthm}
\usepackage{subfigure, multirow,makecell,stfloats}

\usepackage{algorithm}
\usepackage{cite}
\usepackage[noend]{algpseudocode}
\usepackage{setspace}
\usepackage{tabularx}
\usepackage[figuresright]{rotating}

\def\qi#1 {\fbox {\footnote {\ }}\ \footnotetext { From Qi: {\color{red}#1}}}
\usepackage{amssymb}

\usepackage{amsthm}

\theoremstyle{remark} 
\newtheorem{theorem}{{{\textit{Theorem}}}}

\newtheorem{lemma}{{{\textit{Lemma}}}}
\newtheorem{corollary}{{{{\textit{Corollary}}}}}

\newtheorem{definition}{{{\textit{Definition}}}}

\newtheorem{remark}{{{\textit{Remark}}}}

\newtheorem{example}{{{\textit{Example}}}}

\newtheorem{construction}{{{\textit{Construction}}}}
\title{Asymptotically Optimal Aperiodic Doppler Resilient Complementary Sequence  Sets Via Generalized Quasi-Florentine Rectangles}
	\author{Zheng Wang, Zhiye Yang, Yang Yang,~\IEEEmembership{Member,~IEEE}, Avik Ranjan Adhikary,~\IEEEmembership{Member,~IEEE}, and Keqin Feng
	\thanks{ 
		Z. Wang, Y. Yang, and A. R. Adhikary are with the School of Mathematics, Southwest Jiaotong University, Chengdu, 611756, China. (e-mail: wang\_z@my.swjtu.edu.cn, yang\_data@swjtu.edu.cn, avik.adhikary@ieee.org).

		Zhiye Yang is with the Research Center for Number Theory and Its Applications School of Mathematics, Northwest University, Xi’an, 710127, Shaanxi, China. (e-mail: zyyang02@126.com).

		Keqin Feng is with the Department of Mathematical Sciences, Tsinghua University, Beijing, 100084, China. (e-mail:fengkq@tsinghua.edu.cn).
	}
}
\begin{document}
	\maketitle
	
	\begin{abstract}
		Doppler-resilient complementary sequence (DRCS) sets play a vital role in modern communication and sensing systems, particularly in high-mobility environments. This work makes two primary contributions. First, we refine the definition of quasi-Florentine rectangles to a more general form, termed generalized quasi-Florentine rectangles, and propose a systematic method for their construction. Second, we propose several sets of aperiodic DRCS based on generalized quasi-Florentine rectangles and Butson-type Hadamard matrices. The proposed aperiodic DRCS sets are shown to be asymptotically optimal with respect to the lower bound of aperiodic DRCS sets. 
	\end{abstract}
	
	\begin{IEEEkeywords}
		Doppler-resilient complementary sequences, ambiguity function, and generalized quasi-Florentine rectangles.
	\end{IEEEkeywords}
	
	\section{Introduction}
	Golay introduced the concept of Golay complementary pairs (GCPs) \cite {golay1961complementary}, characterized by the property that the sum of their aperiodic autocorrelation functions is zero for all non-zero time shifts. Tseng and Liu later generalized this foundational idea in \cite{tseng1972complementary}, where they extended GCPs to form complementary sequence sets (CSs) and further proposed mutually orthogonal complementary sequence sets (MOCSSs). These sets are designed so that the total aperiodic autocorrelation is zero at every non-zero time shift, while their cross-correlations also sum to zero for all shifts. CSs and MOCSSs have found widespread applications in various domains, including channel estimation \cite{channeles2001complementary}, spreading codes for asynchronous multicarrier code division multiple access (MC-CDMA) communications \cite{mccdma2001multicarrier}, reducing the peak to average power ratio (PAPR) in orthogonal frequency division multiplexing (OFDM) systems \cite{PAPR2021encoding}, Doppler resilient waveform design \cite{Doppler2008doppler}, \cite{Doppler2024complementary}, integrated sensing and communication (ISAC) \cite{ISAC2017ieee}, \cite{duggal2020doppler}.
	Despite their advantages, MOCSSs are constrained by a fundamental limitation: the size of MOCSSs cannot exceed the flock size, restricting their use in systems that require the support of a large number of users. To overcome this bottleneck, researchers have explored the use of quasi-complementary sequence sets (QCSSs) \cite{liu2013tighter}, in which the aggregate aperiodic correlation is allowed to be small but non-zero. 

	Modern communication systems are increasingly challenged by the Doppler effect induced by high-mobility environments. Traditional sequence design methodologies, which primarily optimize correlation functions, have proven inadequate for the demands of modern applications. In emerging paradigms such as ISAC systems, sequence design must account for both delay and Doppler. Sequences that exhibit robustness to Doppler shifts are known as Doppler-resilient sequences (DRSs) \cite{ye22}. Consequently, the focus of performance evaluation has transitioned from correlation functions to the ambiguity function (AF), which captures both delay and Doppler characteristics.
	Several researchers have explored the lower bounds of the maximum AF magnitude. In 2013, Ding \textit{et al.} \cite{ding13} established a lower bound on the maximum sidelobe level of the AF using an approach grounded in the Welch bound \cite{welch1974lower}. It is worth highlighting that in many practical systems (e.g., \cite{duggal2020doppler, kumari}), the ranges of interest for Doppler shifts and delays are often much narrower than the total signal bandwidth and the sequence length. Building on this observation,
	Ye \textit{et al.} \cite{ye22} introduced the concept of low/zero ambiguity zone (LAZ/ZAZ) and further derived lower bounds for LAZ/ZAZ sequence sets. More recently, Meng \textit{et al.} \cite{meng2024new} developed improved lower bounds for aperiodic AF under some specific parameter constraints. In parallel, several DRS constructions have been proposed that meet or approach these lower bounds of the AF \cite{ye22, Tian2025, wang2025asymptotically,wang25A,yang2025new}. However, designing the optimal set of aperiodic LAZ/ZAZ sequences on the entire delay-Doppler plane with flexible parameters remains a challenge.
	
	In order to reduce the magnitude of the AF, based on the idea of complementary sequences, Shen \textit{et al.} \cite{shen2024} proposed the concept of Doppler-resilient complementary sequence (DRCS). Their approach involves transmitting distinct sequences in each pulse and coherently accumulating the resulting AF. In addition, Shen \textit{et al.} \cite{shen2024} derived lower bounds for periodic, aperiodic, and odd periodic AF of DRCS sets and proposed several constructions, including optimal periodic DRCS sets by using circular Florentine rectangles. More recently, Wang \textit{et al.} \cite{wang2025asymptotically} established a tighter lower bound for the aperiodic AF of DRCS sets by the idea of the Levenshtein bound. Furthermore, the authors in \cite{wang2025asymptotically} proposed a construction of the asymptotically optimal DRCS sets based on the quasi-Florentine rectangles and the Butson-type Hadamard matrix. Quasi-Florentine rectangles are intriguing combinatorial objects that were first introduced by Adhikary \textit{et al.} in \cite{Avik2024}, and further, their circular properties were explored in \cite{adhikary2025periodic}. A matrix is called a quasi-Florentine rectangle if it satisfies all the properties of Florentine rectangles; however, it misses one element in each row.
	Compared to circular Florentine rectangles, quasi-Florentine rectangles possess a large number of rows when $N$ is a prime power or greater than a prime power. Indeed, using the quasi-Florentine rectangle, the Florentine rectangle, and the framework given in \cite{wang2025doppler}, we can generate DRCS sets for any given $N$. However, the size of the DRCS sets is equal to the size of the quasi-Florentine rectangles or Florentine rectangles in the construction framework given in \cite{wang2025doppler}. For some value of $N$, with a maximum number of rows of the Florentine rectangle or quasi-Florentine rectangle being $4$, it remains an open problem to construct the optimal aperiodic DRCS set.
	
	In this work, we provide a more refined definition of the quasi-Florentine rectangle, referred to as the generalized quasi-Florentine rectangle, allowing for the possibility of multiple missing elements in each row, while preserving the other properties of Florentine rectangles. We denote the number of rows in the quasi-Florentine rectangle as $ F_{Q,x}(N) $, where $ x $ represents the number of elements missing in each row. The main contributions of this paper are summarized as follows:
	\begin{itemize}
		\item We have presented a definition of generalized quasi-Florentine rectangles. We also propose some constructions of the generalized quasi-Florentine rectangles. The generalized quasi-Florentine rectangle significantly extends the range of admissible parameters and overcomes the limitations on the number of rows inherent in existing Florentine and quasi-Florentine constructions. In particular, for certain values of $N$ for which $F_{Q,1}(N) = F(N) = 4$, the generalized quasi-Florentine rectangle achieves a significantly larger number of rows than both the Florentine and quasi-Florentine rectangles. Representative examples are provided in Table~\ref{table_2}.
		\item Based on generalized quasi-Florentine rectangles and Butson-type Hadamard matrices, we construct several DRCS sets. Remarkably, the resulting DRCS sets are asymptotically optimal with respect to the lower bound of the aperiodic DRCS sets recently in \cite{wang2025doppler}. Moreover, for the case $N$ satisfying $F_{Q,1}(N) = F(N) = 4$, optimal DRCS sets are obtained. Examples are given in Tables~\ref{2} and \ref{3}. As a comparison with known constructions, the parameters of our proposed aperiodic DRCS sets are listed in Table \ref{Table1}. Obviously, the constructed DRCS sets have more flexible parameters in this paper.
	\end{itemize}
	
	\begin{table}[htp]
		\caption{Known Aperiodic DRCS Sets}
		\label{Table1}
		\renewcommand{\arraystretch}{1.3} 
		\resizebox{\textwidth}{!}{
			\begin{tabular}{|c|c|c|c|c|c|c|c|c|}
				\hline References & Set Size & \begin{tabular}{c} 
					Flock \\
					Size
				\end{tabular}  & \begin{tabular}{c} 
					Sequence \\
					Length
				\end{tabular}& $Z_x$ &$Z_y$&  $\hat{\theta}_{\text {max }}$ & Alphabet & Parameter Constraint(s) \\
				\hline \cite{shen2024} & $\widetilde{F}(N)$ & $N$ & $N$ & $N$&$N$& $N$ & $\mathbb{Z}_N$ & 
				\begin{tabular}{c}
					$N \geq 2$ is an integer, $\widetilde{F}(N)$ is the\\
					maximum number of rows for which an\\
					$\widetilde{F}(N) \times N$ circular
					Florentine rectangles exist. 
				\end{tabular}
				\\
				\hline \cite{wang2025doppler} & $F_{Q,1}(N)$ & $N$ & $N-1$ &$N-1$& $N-1$& $N$ & $\mathbb{Z}_r$ & 
				\begin{tabular}{c}
					$2\leq r \leq  N$ is an integer, $N \geq 2$\\
					is an integer, $F_{Q,1}(N)$ is the maximum \\
					number of rows for which an $F_{Q,1}(N) \times(N-1)$ \\
					quasi-Florentine rectangle exists. 
				\end{tabular} \\
				\hline \cite{wang2025doppler} & $F(N)$ & $N$ & $N$ &$N$&$N$& $N$ & $\mathbb{Z}_r$ & 
				\begin{tabular}{c}
					$2\leq r \leq  N$ is an integer, $N \geq 2$\\
					is an integer, $F(N)$ is the maximum \\
					number of rows for which an \\
					$F(N) \times N$  Florentine rectangle exists. 
				\end{tabular} \\
				\hline Theorem \ref{thc2} &$F_{Q,N-L}(N)$ & $N$ & $L$ &$L$&$L$& $N$ & $\mathbb{Z}_{r}$ & \begin{tabular}{c}
					$2\leq r \leq  N$ is an integer, $N \geq 2$\\
					is an integer, $F_{Q,N-L}(N)$ is the maximum \\
					number of rows for which an $F_{Q,N-L}(N) \times L$ \\
					quasi-Florentine rectangle exists, $2\leq L\leq  N$. 
				\end{tabular}\\
				\hline
			\end{tabular}
		}
	\end{table}
	
	The rest of the paper is organized as follows. Section II introduces the necessary notations and lemmas. In Section III, we have proposed the definced of generalized quasi-Florentine rectangles. We also give some systematic constructions of generalized quasi-Florentine rectangles in this section. In Section IV, we propose some new classes of DRCS sets with asymptotic optimality. In Section V, we make a comparison of our work with the existing works in the literature. Finally, Section VI concludes the paper.
	
	\section{Preliminaries}
	For convenience, we will use the following notations consistently in this paper:
	\begin{itemize}
		\item \( \mathbb{Z}_N \) denotes the ring of integers modulo \( N \).
		\item \( \omega_N = e^{\frac{2 \pi \sqrt{-1}}{N}} \) is a primitive \( N \)-th complex root of unity.
		\item \( (\cdot)^* \) denotes the complex conjugate.
		\item $\widetilde{F}(N)$ denotes the maximum number of rows for which
		an $\widetilde{F}(N)\times N$ circular Florentine rectangle exists through systematic construction in \cite{song1992aspects}.
		\item $F(N)$ denotes the maximum number of rows for which
		an $F(N)\times N$ Florentine rectangle exists through systematic construction in \cite{avik2021asymptotically}.
		\item $F_{Q,1}(N)$ denotes the maximum number of rows for which
		an $F_{Q,1}(N)\times (N-1)$ quasi-Florentine rectangle exists through systematic construction in \cite{Avik2024}.
		\item $F_{Q,N-n}(N)$ denotes the maximum number of rows for which
		an $F_{Q, N-n}(N)\times n$ generalized quasi-Florentine rectangle exists through systematic constructions discussed in this paper.
        \item $\widetilde{F}_{Q,N-n}(N)$ denotes the maximum number of rows for which
		an $\widetilde{F}_{Q, N-n}(N)\times n$ generalized circular quasi-Florentine rectangle exists through systematic constructions discussed in this paper.
	\end{itemize}

	\subsection{Ambiguity Function}
	Let $\mathbf{a}=(a(0),a(1),\cdots,a(N-1))$ and $\mathbf{b}=(b(0),b(1),$ $\cdots,b(N-1))$ be two complex unimodular sequences with period $N,$ i.e., $|a(i)|=1$ and $0\leq i<N$.
	The aperiodic cross-AF of $\mathbf{a}$ and $\mathbf{b}$ in time shift $\tau$ and Doppler $\nu$ is defined as follows:
	\begin{align*}
		\hat{AF}_{\mathbf{a}, \mathbf{b}}(\tau, \nu)=\left\{\begin{aligned}
			&\sum_{t=0}^{N-1-\tau} a(t) b^*(t+\tau) \omega_N^{\nu t}, & 0 \leq \tau \leq N-1,\\
			&\sum_{t=-\tau}^{N-1} a(t) b^*(t+\tau) \omega_N^{\nu t}, & 1-N \leq \tau<0,\\
			&0,&|\tau|\geq N.
		\end{aligned}\right.
	\end{align*}
	If $\mathbf{a}=\mathbf{b},$ then $\hat{AF}_{\mathbf{a}, \mathbf{b}}(\tau, \nu)$ is called aperiodic auto-AF and written as $\hat{AF}_{\mathbf{a}}(\tau,\nu)$.

	A DRCS set $\mathcal{C}=\left\{\mathbf{C}^{(0)}, \mathbf{C}^{(1)}, \cdots, \mathbf{C}^{(K-1)}\right\}$ contains $K$ DRCSs, each of which consists of $M\geq 2$ sequences of length $N$, i.e.,
	\begin{align*}
		\mathbf{C}^{(k)}=\left[\begin{array}{c}
			\mathbf{c}_0^{(k)} \\
			\mathbf{c}_1^{(k)} \\
			\vdots \\
			\mathbf{c}_{M-1}^{(k)}
		\end{array}\right]_{M \times N},
	\end{align*}
	where $\mathbf{c}_m^{(k)} =(c_{m}^{(k)}(0), c_{m}^{(k)}(1), \cdots, c_{m}^{(k)}(N-1))$, $0 \leq k < K$, and $0\leq m<M$. 
	For two DRCSs $\mathbf{C}^{(k_1)}$ and $\mathbf{C}^{(k_2)}$, their aperiodic cross-AF is defined as the aperiodic AF sum, i.e.,
	\begin{align*}
		\hat{AF}_{\mathbf{C}^{(k_1)}, \mathbf{C}^{(k_2)}}(\tau,\nu)=\sum_{m=0}^{M-1}\hat{AF}_{\mathbf{c}_m^{(k_1)}, \mathbf{c}_m^{(k_2)}}(\tau,\nu).
	\end{align*}
	If $k_1=k_2,$ then $\hat{AF}_{\mathbf{C}^{(k_1)}, \mathbf{C}^{(k_2)}}(\tau,\nu)$ is called aperiodic auto-AF and written as $\hat{AF}_{\mathbf{C}^{(k_1)}}(\tau,\nu)$.

	\subsection{DRCS Set and Aperiodic Bound}
	For a DRCS set $\mathcal{C}$, its maximum aperiodic AF magnitude over a region $\Pi =(-Z_x, Z_x) \times(-Z_y, Z_y)\subseteq(-N, N) \times(-N, N)$ is defined as $\hat{\theta}_{\max}(\mathcal{C})=\max \left\{\hat{\theta}_a(\mathcal{C}), \hat{\theta}_c(\mathcal{C})\right\}$, where
	\begin{align*}
		\hat{\theta}_a(\mathcal{C})=&\max \left\{\left|A F_{\mathbf{C}} (\tau, \nu)\right|: \mathbf{C}  \in \mathcal{C},(\tau, \nu) \neq(0,0) \in \Pi\right\},\\
		\hat{\theta}_c(\mathcal{C})=&\max \left\{\left|A F_{\mathbf{C}, \mathbf{D}}(\tau, \nu)\right|: \mathbf{C}, \mathbf{D} \in \mathcal{C},\mathbf{C}\neq  \mathbf{D}, (\tau, \nu) \in \Pi\right\}
	\end{align*}
	are called the maximum aperiodic auto-AF magnitude and the maximum aperiodic cross-AF magnitude, respectively.
	Then, $\mathcal{C}$ is referred to as an aperiodic $(K, M, N, \hat{\theta}_{\max},\Pi)$-DRCS set. 
	
	The aperiodic AF magnitude lower bound of DRCS sets can be expressed as follows:
	\begin{lemma}[\cite{wang2025doppler}]\label{lem-wangbound}
		For an aperiodic $\left(K, M, N, \hat{\theta}_{\max }, \Pi\right)$-DRCS set, where $\Pi=\left(-Z_x, Z_x\right) \times\left(-Z_y, Z_y\right), 1 \leq Z_x, Z_y \leq N$, the lower bound of the aperiodic AF magnitude is given by
		\begin{align}\label{C1eq}
			\hat{\theta}_{\max} \geq \sqrt{MN \left( 1 - 2 \sqrt{\frac{M}{3KZ_y}} \right)},
		\end{align}
		where \( K > \frac{3M}{Z_y} \) and \( N \sqrt{\frac{3M}{KZ_y}} \leq Z_x \leq N \). 
	\end{lemma}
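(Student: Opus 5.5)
We propose to prove the bound in Lemma~\ref{lem-wangbound} with a Levenshtein-type weighted averaging argument. The plan is to reduce the DRCS set to an ordinary set of vectors and apply a weighted Welch/Levenshtein inequality, using Doppler phases as extra ``shifts''.

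\textbf{Step 1: build the vector family.} For each $\mathbf{C}^{(k)}$, each delay $0\le \tau<Z_x$ and each Doppler $0\le \nu<Z_y$, form the vector
\begin{align*}
\mathbf{v}_{k,\tau,\nu}\in\mathbb{C}^{M(2N-1)}.
\end{align*}
It is obtained by concatenating the $M$ rows $\mathbf{c}_m^{(k)}$, each zero-padded to length $2N-1$, delayed cyclically by $\tau$, and modulated entrywise by $\omega_N^{\nu t}$. Each vector has squared norm $MN$. The inner product of two such vectors equals an aperiodic cross-AF value $\hat{AF}_{\mathbf{C}^{(k_1)},\mathbf{C}^{(k_2)}}(\tau_1-\tau_2,\nu_1-\nu_2)$, up to a unimodular factor. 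Moreover, whenever $|\tau_1-\tau_2|<Z_x$ and $|\nu_1-\nu_2|<Z_y$, the pair $(\tau,\nu)$ lies in $\Pi$.

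\textbf{Step 2: weighted inequality.} Rather than using all delays uniformly, we restrict to delays $0\le\tau<Z$ with $Z\le Z_x$ and use the weighted Levenshtein inequality
\begin{align*}
\sum_{i,j} w_iw_j|\langle \mathbf{v}_i,\mathbf{v}_j\rangle|^2 \;\ge\; \frac{1}{D}\Bigl(\sum_i w_i\|\mathbf{v}_i\|^2\Bigr)^2 ,
\end{align*}
which follows from positive semidefiniteness of the frame operator. Here $D$ is the effective dimension, and the weights $w_\tau$ are chosen uniformly over the $Z_y$ Dopplers. The key computation is to bound the left side by three contributions:
\begin{itemize}
\item the diagonal terms, giving $KZ_y\sum_\tau w_\tau^2 (MN)^2$;
\item pairs inside $\Pi$, each bounded by $\hat\theta_{\max}^2$;
\item Doppler-orthogonal pairs, where $\nu_1\neq\nu_2$ and the delays coincide, whose inner product vanishes by summing a full root of unity over $N$ terms.
\end{itemize}
The right side involves $\sum_{\tau_1,\tau_2} w_{\tau_1}w_{\tau_2}$ times an overlap count $(N-|\tau_1-\tau_2|)M$. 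This comes from decomposing the dimension coordinatewise: summing $|\langle \mathbf{v}_i,\mathbf{v}_j\rangle|^2$ over the vectors is $\ge$ the squared Frobenius norm of the Gram matrix restricted to coordinates, which is the standard Levenshtein trick.

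\textbf{Step 3: optimize.} We take uniform weights on $Z$ consecutive delays. The inequality then becomes
\begin{align*}
\hat\theta_{\max}^2 \ge MN\Bigl(1-\frac{Z}{3N}-\frac{MN}{KZ_yZ}\Bigr)
\end{align*}
up to lower-order terms; the $1/3$ arises from $\sum_{\tau_1,\tau_2}|\tau_1-\tau_2|\approx Z^3/3$. Minimizing the loss over $Z$ gives $Z=N\sqrt{3M/(KZ_y)}$, and this choice is admissible precisely when it is $\le Z_x$, matching the hypothesis. The loss then equals $2\sqrt{M/(3KZ_y)}$, which yields \eqref{C1eq}. Requiring $K>3M/Z_y$ ensures $Z<N$.

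\textbf{Main obstacle.} The hardest step is getting exact constants in Step 2 rather than just asymptotic ones. This means tracking the overlap count $\sum(N-|\tau_1-\tau_2|)$ precisely, and showing that the cross-Doppler terms at equal delay vanish while other terms are dominated. I would first try the clean uniform-weight computation, and only then check that the discarded lower-order terms have the right sign.
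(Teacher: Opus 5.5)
The paper does not prove this lemma; it imports it from \cite{wang2025doppler}, where it is derived by the Levenshtein approach. Your overall route is the right one: treat the delayed, Doppler-modulated, zero-padded copies of each $\mathbf{C}^{(k)}$ as a weighted vector family. As written, however, two steps do not go through.

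\textbf{Step 2 uses the wrong inequality.} The inequality you state, with the factor $\frac{1}{D}$, is the Welch trace inequality $\|S\|_F^2\ge(\mathrm{tr}\,S)^2/\mathrm{rank}\,S$ for $S=\sum_i w_i\mathbf{v}_i\mathbf{v}_i^H$. It contains no overlap count. With $D=M(N+Z-1)$ it gives only about $\hat\theta_{\max}^2\ge MN\bigl(1-2\sqrt{M/(KZ_y)}\bigr)$, which loses the factor $3$. Your later remark about decomposing coordinatewise points to the correct step, but that step must replace the $\frac1D$ form, not sit alongside it. The correct step is $\|S\|_F^2\ge\sum_{(m,l)}S_{(m,l),(m,l)}^2$: the Frobenius norm dominates the diagonal. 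With unimodular entries and unit weights on $k\in[0,K)$, $\tau\in[0,Z)$, $\nu\in[0,Z_y)$, this diagonal sum equals exactly $MK^2Z_y^2\sum_{\tau_1,\tau_2}(N-|\tau_1-\tau_2|)$.

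A smaller point: equal-delay, different-Doppler inner products vanish only when $k_1=k_2$. For $k_1\neq k_2$ they are cross-AF values at $\tau=0$. You do not need these zeros anyway, since bounding every off-diagonal term by $\hat\theta_{\max}^2$ suffices.

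\textbf{Step 3 does not yet give the exact bound.} Checking the sign of the discarded terms is not enough. $Z$ must be an integer, $Z^*=N\sqrt{3M/(KZ_y)}$ generally is not, and by AM--GM $\frac{Z}{3N}+\frac{MN}{KZ_yZ}>2\sqrt{M/(3KZ_y)}$ for every $Z\neq Z^*$. So the lower-order terms must absorb a rounding loss, and you have to keep them exactly. Use $\sum_{\tau_1,\tau_2}|\tau_1-\tau_2|=(Z^3-Z)/3$ and the count of $KZZ_y(KZZ_y-1)$ off-diagonal pairs. This gives
\[
\hat\theta_{\max}^2\ \ge\ \frac{MKZ_y\bigl(NZ-\frac{Z^2-1}{3}\bigr)-M^2N^2}{KZZ_y-1}.
\]
Write $Z=Z^*+\delta$ and $\epsilon=2\sqrt{M/(3KZ_y)}$. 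A direct expansion shows the right-hand side is at least $MN(1-\epsilon)$ if and only if $\frac{KZ_y(\delta^2-1)}{3}\le N(1-\epsilon)$. This holds whenever $|\delta|\le 1$, because $K>3M/Z_y$ forces $\epsilon<1$.

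So take $Z=\lceil Z^*\rceil$. Since $Z_x$ is an integer with $Z_x\ge Z^*$, this choice satisfies $Z\le Z_x\le N$, which is exactly where the hypothesis on $Z_x$ enters. It is the $-Z$ in $(Z^3-Z)/3$ that pays for the rounding. With your planned approximation $Z^3/3$, the condition becomes $\frac{KZ_y\delta^2}{3}\le N(1-\epsilon)$, which can fail when $KZ_y$ is large compared with $N$.
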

	
	In the next step, to evaluate the closeness between the lower bounds and the achieved aperiodic AF, we define the optimality factor $\hat{\rho}$ as follows.
	
	\begin{definition}[Optimality Factor]
		For an aperiodic $\left(K, M, N, \hat{\theta}_{\max }, \Pi\right)$-DRCS set, where $\Pi = (-Z_x, Z_x) \times (-Z_y, Z_y),$ the optimality factor  $\hat{\rho}$ is as follows:
		\begin{align}\label{aperiodicoptimal1}
			\hat{\rho} = \frac{\hat{\theta}_{\max}}{\sqrt{MN \left( 1 - 2 \sqrt{\frac{M}{3KZ_y}} \right)}},
		\end{align}
		where \( K > \frac{3M}{Z_y} \) and \( N \sqrt{\frac{3M}{KZ_y}} \leq Z_x \leq N \). 
	\end{definition}
	The DRCS set is said to be optimal if $\hat{\rho}=1$, and is said to be asymptotically optimal if $\lim\limits_{N \rightarrow \infty}\hat{\rho} = 1$.

	\subsection{Butson-type Hadamard Matrix}
	\begin{definition}[\cite{had1973complex}]
		Let $N$ and $r$ be two positive integers, and let $\mathbf{B}=(\omega_r^{b_{i,j}})_{0\leq i,j<N}$ be a matrix of order $N$, where $b_{i,j} \in \mathbb{Z}$. If $\mathbf{B}\mathbf{B}^H=N\mathbf{I}$, then it is called a Butson-type Hadamard matrix, denoted by $BH(N,r)$, where $\mathbf{I}$ is the identity matrix of order $N$.
	\end{definition}
	
	\begin{remark}
		The discrete Fourier transform (DFT) matrices, the Walsh-Hadamard matrices, and the Hadamard matrices are special cases of Butson-type Hadamard matrices with parameters $BH(N, N)$, $BH(2^m,2)$, and $BH(4N,2) $, respectively. 
	\end{remark}
	
	\begin{lemma}[\cite{had1973complex}]\label{BH_lemma}
		Let $BH(L_1,r_1)$ and $BH(L_2,r_2)$ be two Butson-type Hadamard matrices. Using the Kronecker product, we can obtain $BH(L_1L_2,\text{lcm}\{r_1,r_2\})$, where $\text{lcm}$ denotes the least common multiple. 
	\end{lemma}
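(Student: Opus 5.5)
The statement is Lemma~\ref{BH_lemma}: the Kronecker product of a $BH(L_1,r_1)$ and a $BH(L_2,r_2)$ is a $BH(L_1L_2,\mathrm{lcm}\{r_1,r_2\})$. I will verify the two defining properties of a Butson-type Hadamard matrix directly. Let $\mathbf{B}_1=(\omega_{r_1}^{b_{i,j}})$ and $\mathbf{B}_2=(\omega_{r_2}^{c_{k,l}})$ satisfy $\mathbf{B}_1\mathbf{B}_1^H=L_1\mathbf{I}_{L_1}$ and $\mathbf{B}_2\mathbf{B}_2^H=L_2\mathbf{I}_{L_2}$, and set $\mathbf{B}=\mathbf{B}_1\otimes\mathbf{B}_2$, a square matrix of order $L_1L_2$.

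\textbf{Step 1 (entries).} The entry of $\mathbf{B}$ in row $(i,k)$ and column $(j,l)$ is $\omega_{r_1}^{b_{i,j}}\omega_{r_2}^{c_{k,l}}$. Write $r=\mathrm{lcm}\{r_1,r_2\}$, so that $\omega_{r_1}=\omega_r^{r/r_1}$ and $\omega_{r_2}=\omega_r^{r/r_2}$. Then the entry equals
\begin{align*}
\omega_r^{\,b_{i,j}r/r_1+c_{k,l}r/r_2},
\end{align*}
whose exponent is an integer. Hence every entry of $\mathbf{B}$ is an $r$-th root of unity.

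\textbf{Step 2 (orthogonality).} Use the mixed-product property $(\mathbf{A}\otimes\mathbf{C})(\mathbf{D}\otimes\mathbf{E})=\mathbf{A}\mathbf{D}\otimes\mathbf{C}\mathbf{E}$ together with $(\mathbf{A}\otimes\mathbf{C})^H=\mathbf{A}^H\otimes\mathbf{C}^H$. These give
\begin{align*}
\mathbf{B}\mathbf{B}^H=(\mathbf{B}_1\mathbf{B}_1^H)\otimes(\mathbf{B}_2\mathbf{B}_2^H)=L_1\mathbf{I}_{L_1}\otimes L_2\mathbf{I}_{L_2}=L_1L_2\,\mathbf{I}_{L_1L_2}.
\end{align*}
Together with Step 1, this shows that $\mathbf{B}$ is a $BH(L_1L_2,r)$.

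There is no real obstacle here. The only point needing care is the alphabet in Step 1: the exponents must be rewritten over the common denominator $r$ so that they are integers, as the definition requires.
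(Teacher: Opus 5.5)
Your proof is correct and is exactly the Kronecker-product argument the paper points to; the paper cites this lemma from the literature and gives no proof of its own. Rewriting both roots of unity as powers of $\omega_r$ with $r=\mathrm{lcm}\{r_1,r_2\}$ handles the alphabet, and the mixed-product property gives $\mathbf{B}\mathbf{B}^H=L_1L_2\mathbf{I}_{L_1L_2}$, so nothing is missing.
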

	This paper utilizes Butson-type Hadamard matrices to construct DRCSs with smaller alphabets. In \cite{hadc2006complex}, the authors presented Butson-type Hadamard matrices for various parameters, and Table \ref{BH} lists some known parameters of the seed Butson-type Hadamard matrices $BH(N,r)$ for \( r \leq 7 \). Furthermore, Wallis \cite{had1973complex} proposed various Butson-type Hadamard matrices of different dimensions and alphabet sizes utilizing the Kronecker product, as presented in Lemma \ref{BH_lemma}.

	\begin{table}[htp]
		\begin{center}
			\caption{Parameters of seed Butson-type Hadamarad matrices over alphabet size $\leq 7$.}   
			\label{BH}
			\renewcommand{\arraystretch}{1.3} 
			\begin{tabular}{|c|c|}
				\hline \begin{tabular}{c} 
					Alphabet \\
				\end{tabular} & Parameters \\
				\hline $\mathbb{Z}_2$ & $B H(2,2)$ \\
				\hline $\mathbb{Z}_3$ & $B H(3,3), B H(6,3), B H(12,3),B H(21,3)$ \\
				\hline $\mathbb{Z}_4$ & $B H(4,4), B H(6,4), B H(10,4),B H(12,4), B H(14,4)$ \\
				\hline $\mathbb{Z}_5$ & $B H(5,5), B H(10,5)$ \\
				\hline $\mathbb{Z}_6$ & $B H(6,6), B H(7,6), B H(9,6),BH(10,6),BH(13,6),BH(14,6)$ \\
				\hline $\mathbb{Z}_7$ &$B H(7,7), B H(14,7)$ \\
				\hline
			\end{tabular}
		\end{center}
	\end{table}
	

	\section{Generalized Quasi-Florentine Rectangles}
	In this section, we redefine the quasi-Florentine rectangle in a more general way and then provide some constructions.
	
	\begin{definition}\label{GFR}
    A matrix $\mathcal{A}$ over $\mathbb{Z}_N$ is said to be a
\textit{generalized quasi-Florentine rectangle} if it satisfies the following two
conditions:	
		\begin{itemize}
			\item[C1:]  Each row contains $n$ distinct symbols, where each
                  symbol occurs exactly once in each row, $2\leq n\leq N$.
			\item[C2:] For any ordered pair $(a, b)$ of two distinct symbols, and
                       any integer $m$ with  $1\leq m<n$, there is at most one
                        row in which $b$ is $m$ steps right of $a$.
		\end{itemize}
	\end{definition}
    Upon augmenting Condition C2 by imposing that the steps are considered circularly, the improved definition is referred to as a \textit{generalized circular quasi-Florentine rectangle}.  Obviously, every generalized circular quasi-Florentine rectangle is a generalized quasi-Florentine rectangle, but not every generalized quasi-Florentine rectangle qualifies as a generalized circular quasi-Florentine rectangle.

    \begin{example}
    	For $N=8$, the following is an example of a generalized quasi-Florentine rectangle with $F_{Q,2}(8)=8$.
    \[ \mathcal{A}=\left[\begin{array}{llllllll}
			1&2&4&3&6&7 \\
			0&3&5&2&7&6 \\
			3&0&6&1&4&5\\
			5&6&0&7&2&3\\
			2&1&7&0&5&4\\
			7&4&2&5&0&1\\
			6&5&3&4&1&0\\
			4&7&1&6&3&2\\
		\end{array}\right]_{8 \times 6}.\]
    \end{example}
    
\begin{remark}
    The only difference among the Florentine rectangle \cite{avik2021asymptotically}, the quasi-Florentine rectangle \cite{Avik2024}, and the generalized quasi-Florentine rectangle is that, in the Florentine rectangle,  each row must contain all the elements of $\mathbb{Z}_N$. In the quasi-Florentine rectangle defined in \cite{Avik2024}, one element is missing in each row in $\mathbb{Z}_N$.  In contrast, in the generalized quasi-Florentine rectangle, $N-n$ elements are missing in each row, where $2\leq n\leq N$. In particular, when $n = N$, a generalized quasi-Florentine rectangle becomes a Florentine rectangle \cite{avik2021asymptotically}. When $n = N-1$, a generalized quasi-Florentine rectangle becomes a quasi-Florentine rectangle described in \cite{Avik2024}. To know more about Florentine rectangles and quasi-Florentine rectangles, please see \cite{avik2021asymptotically} and \cite{Avik2024} and the references therein. The distinctions among the circular Florentine rectangle \cite{song1992aspects}, the circular quasi-Florentine rectangle \cite{adhikary2025periodic}, and the generalized circular quasi-Florentine rectangle proposed in this work follow analogously from the non-circular case and will not be elaborated here.
\end{remark}
	
	
	Next, we will present two types of known constructions for the circular Florentine rectangles \cite{song1992aspects} and circular quasi-Florentine rectangles\cite{adhikary2025periodic}.
	
	\begin{lemma}[\cite{song1992aspects}]\label{lemma3}
		Let $N$ be a positive integer, and let $p$ be the smallest prime factor of $N$. Define a rectangle $\mathcal{A} = (a_{i,j})_{0 \leq i < p-1,\ 0 \leq j<N}$, $a_{i, j}=(i+1) \times j(\bmod ~N)$, then $\mathcal{A}$ is a  circular Florentine rectangle of size \( (p-1) \times N \) over $\mathbb{Z}_{N}$.
	\end{lemma}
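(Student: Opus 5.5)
We must show that the rectangle $\mathcal{A}=(a_{i,j})$ with $a_{i,j}=(i+1)j \pmod N$, for $0\le i<p-1$ and $0\le j<N$, satisfies C1 with $n=N$, and C2 with steps read circularly, i.e. modulo $N$. Everything rests on one fact: since $p$ is the smallest prime factor of $N$, every integer $c$ with $1\le c\le p-1$ has $\gcd(c,N)=1$, so $c$ is a unit in $\mathbb{Z}_N$.

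For C1, fix a row $i$. The multiplier $i+1$ satisfies $1\le i+1\le p-1$, so it is a unit in $\mathbb{Z}_N$. Hence the map $j\mapsto (i+1)j$ is a bijection of $\mathbb{Z}_N$, and row $i$ contains every symbol of $\mathbb{Z}_N$ exactly once.

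For the circular C2, suppose that in row $i$ the symbol $b$ sits $m$ steps (cyclically) to the right of $a$, where $1\le m<N$. If $a$ is at column $j$, then $b$ is at column $j+m \pmod N$. This gives
\[
b-a=(i+1)(j+m)-(i+1)j=(i+1)m \pmod N .
\]
So the difference $b-a$ depends only on the row and the step, not on the position $j$. Now suppose the same ordered pair $(a,b)$, with the same step $m$, occurs in two rows $i\neq i'$. Then $(i+1)m\equiv(i'+1)m \pmod N$, which gives $(i-i')m\equiv 0 \pmod N$. The difference satisfies $0<|i-i'|\le p-2<p$, so $i-i'$ is a unit modulo $N$. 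Therefore $m\equiv 0 \pmod N$, which contradicts $1\le m<N$.

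There is also a case with no analogue in the non-circular check: the pair could appear twice within the same row at the same step. This cannot happen, because each symbol occupies exactly one position in a row, so the position of $b$ relative to $a$ is determined. Hence each ordered pair at each circular step $m$ occurs in at most one row, and $\mathcal{A}$ is a $(p-1)\times N$ circular Florentine rectangle.

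The only real content is the coprimality argument. The one point that needs care is that both the row multipliers $i+1$ and their differences $i-i'$ must lie strictly between $0$ and $p$ in absolute value. This is exactly why the number of rows is limited to $p-1$.
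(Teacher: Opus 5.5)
Your proof is correct and complete. The multipliers $i+1\in\{1,\dots,p-1\}$ and the nonzero differences $i-i'$ (of absolute value less than $p$) are all coprime to $N$, so each row is a permutation of $\mathbb{Z}_N$ and $(i-i')m\equiv 0 \pmod N$ forces $m\equiv 0$. The paper gives no proof of its own, only the citation to Song, and your coprimality argument is the standard one underlying that reference.
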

	
	\begin{lemma}[\cite{adhikary2025periodic}]\label{lemma4}
		Let $\mathcal{A}$ be a matrix of order \( p^n \times (p^n - 1) \) defined as follows:
		\[
		\mathcal{A} = \left[\begin{array}{cccc}
			a_{0,0} & a_{0,1} & \cdots & a_{0, p^n-2} \\
			a_{1,0} & a_{1,1} & \cdots & a_{1, p^n-2} \\
			\vdots & \vdots & \ddots & \vdots \\
			a_{p^n-1,0} & a_{p^n-1,1} & \cdots & a_{p^n-1, p^n-2}
		\end{array}\right]_{p^n \times (p^n-1)},
		\]
		where
		\[
		a_{i,j} =
		\begin{cases}
			\psi(\alpha^j), & i = 0, \\
			\psi(\alpha^j + \alpha^{i-1}), & 0 < i < p^n,
		\end{cases}
		\]
		and $\psi$ is a one-to-one mapping.
		Then the matrix $\mathcal{A}$ is a circular quasi-Florentine rectangle of size \( p^n \times (p^n - 1) \) over $\mathbb{Z}_{p^n}$.
	\end{lemma}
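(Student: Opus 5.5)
The plan is to work inside the finite field $\mathbb{F}_{p^n}$, taking $\alpha$ to be a primitive element and $\psi:\mathbb{F}_{p^n}\to\mathbb{Z}_{p^n}$ the given bijection. Since $\psi$ is one-to-one, both conditions C1 and C2 of Definition~\ref{GFR}, including the circular version of C2, can be checked on the underlying field elements before applying $\psi$. To treat all rows uniformly, I would write row $i$ as $x_{i,j}=\alpha^j+c_i$ for $0\le j\le p^n-2$, where $c_0=0$ and $c_i=\alpha^{i-1}$ for $0<i<p^n$. The key observation is that $i\mapsto c_i$ is a bijection from $\{0,\dots,p^n-1\}$ onto $\mathbb{F}_{p^n}$. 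This holds because $\alpha^0,\dots,\alpha^{p^n-2}$ are exactly the nonzero elements of the field.

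\textbf{Condition C1.} Within a fixed row, $j\mapsto\alpha^j$ is injective on $0\le j\le p^n-2$, and translating by the constant $c_i$ preserves injectivity. Hence each row contains $p^n-1$ distinct symbols. Row $i$ misses exactly the symbol $\psi(c_i)$, since $\alpha^j\neq 0$ for every $j$. Thus the rectangle has $n=p^n-1$ columns, as Definition~\ref{GFR} requires.

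\textbf{Condition C2 (circular).} I will index columns modulo $p^n-1$, which matches the exponent arithmetic of $\alpha$ because $\alpha^{p^n-1}=1$. Suppose the field elements $x\neq y$ occur in some row $i$ with $y$ exactly $m$ steps (circularly) to the right of $x$, where $1\le m<p^n-1$. Then for some $j$,
\[
x=\alpha^j+c_i,\qquad y=\alpha^{j+m}+c_i .
\]
Eliminating $\alpha^j$ gives
\[
y-\alpha^m x=c_i(1-\alpha^m).
\]
Since $\alpha$ is primitive and $m\not\equiv 0 \pmod{p^n-1}$, we have $\alpha^m\neq 1$. Therefore
\[
c_i=\frac{y-\alpha^m x}{1-\alpha^m}
\]
is uniquely determined by the triple $(x,y,m)$. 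By the bijectivity of $i\mapsto c_i$, the row index $i$ is then uniquely determined. So for each ordered pair of distinct symbols and each step $m$, at most one row has $b$ exactly $m$ steps right of $a$, even with steps taken circularly.

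\textbf{Main obstacle.} The only delicate point is bookkeeping: making sure that "circular steps" over $p^n-1$ columns correspond exactly to exponents modulo $p^n-1$, and that the exceptional row $i=0$ fits the same formula through $c_0=0$. Once that uniform parametrization by $c_i\in\mathbb{F}_{p^n}$ is in place, the elimination step above closes the argument in one line.
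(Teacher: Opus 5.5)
The paper does not prove this lemma; it imports it from the cited reference, so there is no in-text argument to compare yours against. Your proof is correct and complete.

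The uniform parametrization of row $i$ as $\alpha^j+c_i$, with $i\mapsto c_i$ a bijection onto $\mathbb{F}_{p^n}$, absorbs the exceptional row $0$ and avoids the case split that the piecewise definition invites. C1 then follows from $\{\alpha^j\}=\mathbb{F}_{p^n}^*$. For circular C2, the identity $y-\alpha^m x=c_i(1-\alpha^m)$ together with $\alpha^m\neq 1$ for $1\le m<p^n-1$ pins down the row.

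This is the usual two-row argument in constructive form. Suppose rows $i_1\neq i_2$ both contain the pair at step $m$. Subtracting the two equations gives $(\alpha^{j_1}-\alpha^{j_2})(\alpha^m-1)=0$, which forces $\alpha^{j_1}=\alpha^{j_2}$ and hence $c_{i_1}=c_{i_2}$, a contradiction.

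There is one caveat, which concerns the statement rather than your reasoning. When $p^n=2$ the rectangle has a single column, which violates the requirement $2\le n$ in Definition~\ref{GFR}. You should therefore state explicitly that $p^n\ge 3$ is assumed.
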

	
	Next, we propose two simple constructions of the generalized quasi-Florentine rectangles with the help of Florentine rectangles and quasi-Florentine rectangles. 
	
	\begin{theorem}\label{T1}
		i) Let $\mathcal{A}$ be a Florentine rectangle of size $F(N) \times N$. 
		By removing either the leftmost or the rightmost $c$ columns of $\mathcal{A}$, we obtain a generalized quasi-Florentine rectangle of size 
		$F_{Q,c}(N) \times (N-c)$, where $F_{Q,c}(N) = F(N)$ and $0\leq c<N-1$.\\
		ii) Let $\mathcal{A}$ be a quasi-Florentine rectangle of size $F_{Q,1}(N) \times (N-1)$. 
		By removing either the leftmost or the rightmost $c-1$ columns of $\mathcal{A}$, we obtain a generalized quasi-Florentine rectangle of size 
		$F_{Q,c}(N) \times (N-c)$, where $F_{Q,c}(N) = F_{Q,1}(N)$ and $1\leq c<N-1$.
	\end{theorem}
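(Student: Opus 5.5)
The plan is to verify Conditions C1 and C2 of Definition~\ref{GFR} for the truncated matrix. The key observation is that deleting a contiguous block of columns at one end of a row preserves the relative positions of all surviving entries. This is not true if we delete columns from the middle, which is why the statement restricts to leftmost or rightmost columns.

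\emph{Part i).} Let $\mathcal{A}=(a_{i,j})$ be an $F(N)\times N$ Florentine rectangle. Let $\mathcal{A}'$ be obtained by deleting the rightmost $c$ columns, i.e. $\mathcal{A}'=(a_{i,j})_{0\le i<F(N),\,0\le j<N-c}$. The left-deletion case is symmetric, reindexing columns by $j\mapsto j-c$.

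For C1, each row of $\mathcal{A}$ is a permutation of $\mathbb{Z}_N$. The $n=N-c$ retained entries are therefore distinct, and each occurs exactly once. Since $0\le c<N-1$, we get $2\le n\le N$, as the definition requires.

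For C2, suppose that in two distinct rows $i\neq i'$ of $\mathcal{A}'$ the symbol $b$ is $m$ steps right of $a$, with $1\le m<n$. That means $a_{i,j}=a,\ a_{i,j+m}=b$ and $a_{i',j'}=a,\ a_{i',j'+m}=b$. All these entries sit in the same columns of $\mathcal{A}$, because column indices are unchanged (or all shifted by the same $c$). So in $\mathcal{A}$ the symbol $b$ is $m$ steps right of $a$ in two distinct rows. Here $1\le m<n\le N$, which contradicts the Florentine property of $\mathcal{A}$. Hence $\mathcal{A}'$ is a generalized quasi-Florentine rectangle with the same number of rows, giving $F_{Q,c}(N)=F(N)$ in the sense of the constructed row count.

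\emph{Part ii).} The argument is identical, starting from an $F_{Q,1}(N)\times(N-1)$ quasi-Florentine rectangle. Each row consists of $N-1$ distinct symbols of $\mathbb{Z}_N$ and satisfies the same ``at most one row'' condition for all steps $1\le m<N-1$. Removing $c-1$ end columns leaves $n=N-c$ columns of distinct symbols, so C1 holds, with $n\ge 2$ because $c<N-1$. Any repeated pattern in C2 with step $m<N-c$ would lift to a repeated pattern with the same step $m<N-1$ in the original rectangle, which is impossible.

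The only point needing care, and the closest thing to an obstacle, is the index bookkeeping. We must check that a step $m$ in the truncated matrix corresponds to the same step $m$ in the original. This holds because the retained columns form a contiguous block. We must also check that the admissible range of $m$ shrinks rather than grows under truncation. Finally, the relation $F_{Q,c}(N)=F(N)$ (resp.\ $F_{Q,1}(N)$) should be read as the row count achieved by this systematic construction, consistent with the paper's notation, and not as a proven maximum.
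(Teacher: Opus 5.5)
Your proposal is correct and follows the same route as the paper: C1 holds because the surviving entries of each row are still distinct, and C2 holds because deleting end columns preserves column offsets, so any repeated pattern in the truncated matrix would already be a repeated pattern in the original rectangle. Your version is more careful than the paper's on three points: the index shift under left deletion, the check that $n=N-c\ge 2$, and your reading of $F_{Q,c}(N)=F(N)$ as the row count this construction achieves rather than a proven maximum.
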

	
	\begin{proof}
		i) Let $\mathcal{A}=(a_{i,j})_{0 \leq i < F(N),\ 0 \leq j < N}$ be a Florentine rectangle.  
		By definition, each row $a_i$ is a permutation of $\mathbb{Z}_N$.  
		After removing either the leftmost or the rightmost $c$ columns of $\mathcal{A}$, each row $a_i$ contains $N-c$ distinct elements of $\mathbb{Z}_N$, for $0 \leq i < F(N)$.  
		
		Moreover, recall that for two arbitrary distinct elements $a,b \in \mathbb{Z}_N$ and an arbitrary integer $m$ with $1 \leq m < N$, there is at most one row in which $b$ is $m$ steps right of $a$.
		If we remove $c$ columns (either leftmost or rightmost), then for two distinct elements $a,b \in \mathbb{Z}_N$ and an arbitrary integer $m$ with $1 \leq m < N-c$, there still exists at most one row in which $b$ is $m$ steps right of $a$. Therefore, the property is preserved under the removal of $c$ columns, and the proof of (i) is complete.  
		
		ii) The proof of ii) is similar to that of i), hence omitted.
	\end{proof}
	
	Next, we will propose a new construction of the generalized quasi-Florentine rectangles.
	
	\begin{construction} \label{C1}
		Let $N_1$ and $N_2$ be two positive integers. Let \( \mathcal{A} = (a_{i,j})_{0 \leq i < s,\ 0 \leq j < n} \) be an \( s \times n \) matrix over \( \mathbb{Z}_{N_1} \), and let \( \mathcal{B} = (b_{i,j})_{0 \leq i < t,\ 0 \leq j < m} \) be a \( t \times m \) matrix over \( \mathbb{Z}_{N_2} \).
		
		Define the matrix $\mathbf{U}^{(i)}$ for each \( 0 \leq i < \min\{s,t\} \),
		\[
		\mathbf{U}^{(i)} = 
		\begin{bmatrix}
			a_{i,0} + N_1 b_{i,0} & a_{i,1} + N_1 b_{i,0} & \cdots & a_{i,n-1} + N_1 b_{i,0} \\
			a_{i,0} + N_1 b_{i,1} & a_{i,1} + N_1 b_{i,1} & \cdots & a_{i,n-1} + N_1 b_{i,1} \\
			\vdots & \vdots & \ddots & \vdots \\
			a_{i,0} + N_1 b_{i,m-1} & a_{i,1} + N_1 b_{i,m-1} & \cdots & a_{i,n-1} + N_1 b_{i,m-1}
		\end{bmatrix}.
		\]
		
		Now construct an \( \min\{s,t\} \times nm \) matrix \( \mathcal{D} = (d_{i,j})_{0 \leq i < \min\{s,t\},\ 0 \leq j < nm} \) over \( \mathbb{Z}_{N_1 N_2} \), 
		\[
		d_{i,j} = U^{(i)}_{j_1, j_2} = a_{i,j_2} + N_1 b_{i,j_1}, \quad \text{for } j_1 = \left\lfloor \frac{j}{n} \right\rfloor,\ j_2 = j (\bmod~n), 
		\]
		where $\lfloor x\rfloor$ denotes the largest integer less than or equal to the real number $x$.
	\end{construction}
	
	\begin{theorem}\label{T2}
		Let \( \mathcal{A} \) be a generalized circular quasi-Florentine rectangle of size $\widetilde{F}_{Q,N_1-n}(N_1)\times n$ and \( \mathcal{B} \) be a generalized quasi-Florentine rectangle of size $F_{Q,N_2-m}(N_2)\times n$. Then, the matrix \( \mathcal{D} \) in Construction~\ref{C1} is a generalized quasi-Florentine rectangle of size $F_{Q,N_1N_2-nm}(N_1N_2)\times mn$, where $F_{Q,N_1N_2-nm}(N_1 N_2)=\min\{\widetilde{F}_{Q,N_1-n}(N_1),F_{Q,N_2-m}(N_2)\}$.
	\end{theorem}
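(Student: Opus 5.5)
The plan is to verify the two defining conditions C1 and C2 of Definition~\ref{GFR} directly for the rows of $\mathcal{D}$. Throughout I read $\mathcal{B}$ as an $F_{Q,N_2-m}(N_2)\times m$ matrix, as Construction~\ref{C1} requires. The whole argument rests on two observations. First, the map $(\alpha,\beta)\mapsto \alpha+N_1\beta$ from $\mathbb{Z}_{N_1}\times\mathbb{Z}_{N_2}$ to $\mathbb{Z}_{N_1N_2}$ is a bijection (mixed-radix representation, taking representatives $0\le\alpha<N_1$, $0\le\beta<N_2$). Second, position $j=j_1n+j_2$ of row $i$ of $\mathcal{D}$ carries the pair $(a_{i,j_2},b_{i,j_1})$. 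The number of rows is then $\min\{s,t\}$ by construction, which gives the claimed value of $F_{Q,N_1N_2-nm}(N_1N_2)$.

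For C1, I fix a row $i$. Two positions $j=j_1n+j_2$ and $k=k_1n+k_2$ carry the same symbol only if $a_{i,j_2}=a_{i,k_2}$ and $b_{i,j_1}=b_{i,k_1}$, by the bijection above. Since rows of $\mathcal{A}$ and of $\mathcal{B}$ have distinct entries, this forces $j_2=k_2$ and $j_1=k_1$. Hence each row of $\mathcal{D}$ has $nm$ distinct symbols, and $2\le nm\le N_1N_2$.

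For C2, I suppose $x\neq y$ in $\mathbb{Z}_{N_1N_2}$ and that $y$ is $d$ steps right of $x$ in rows $i$ and $i'$ of $\mathcal{D}$, with $1\le d<nm$. I want to show $i=i'$. Write $x=\alpha+N_1\beta$ and $y=\alpha'+N_1\beta'$. In row $i$, let $\alpha$ and $\alpha'$ sit in columns $j_2$ and $j_2'$ of row $i$ of $\mathcal{A}$, and let $\beta$ and $\beta'$ sit in columns $j_1$ and $j_1'$ of row $i$ of $\mathcal{B}$. These columns are uniquely determined by C1 for $\mathcal{A}$ and $\mathcal{B}$. Then
\[
d=(j_1'-j_1)n+(j_2'-j_2),
\]
and the analogous identity holds in row $i'$. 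I then split into two cases.

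\emph{Case $\alpha\neq\alpha'$.} Then $j_2\neq j_2'$, so $r:=d \bmod n$ satisfies $r\neq 0$ and $j_2'\equiv j_2+r \pmod n$. Thus $\alpha'$ is $r$ steps right of $\alpha$ circularly in row $i$ of $\mathcal{A}$, and likewise in row $i'$. The circular property of $\mathcal{A}$ gives $i=i'$.

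\emph{Case $\alpha=\alpha'$.} Then $j_2=j_2'$, so $d=(j_1'-j_1)n$ and $\beta\neq\beta'$. Hence $1\le d/n<m$, and $\beta'$ is $d/n$ steps right of $\beta$ in row $i$ of $\mathcal{B}$, and likewise in row $i'$. The (non-circular) property C2 of $\mathcal{B}$ gives $i=i'$.

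The main point needing care is the first case. The step count inside $\mathcal{A}$ is only determined modulo $n$: $j_2'-j_2$ may be negative, with the block index $j_1'-j_1$ absorbing a carry. This is exactly why circularity is required of $\mathcal{A}$ but not of $\mathcal{B}$. I would also check that the case split is exhaustive, which it is since it is decided by whether $\alpha=\alpha'$. Finally, both cases use that $d$ is the same in both rows, so the residue $r$, respectively the quotient $d/n$, agrees between rows $i$ and $i'$.
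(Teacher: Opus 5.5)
Your proposal is correct and follows essentially the same route as the paper: verify C1 through uniqueness of the representation $\alpha+N_1\beta$, and for C2 use the circular property of $\mathcal{A}$ when the shift is nonzero modulo $n$, and property C2 of $\mathcal{B}$ when the shift is a multiple of $n$. Your split on $\alpha=\alpha'$, with the $\mathcal{A}$-step read modulo $n$, is a tidier version of the paper's case analysis on $\tau_2$ and on whether $j_2+\tau_2$ overflows, and it also covers the situation where the carry differs between the two rows, which the paper's Case 1 tacitly assumes away.
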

    
	\begin{proof}
		We first prove that any two elements in the row \( d_i \) of the matrix \( \mathcal{D} \) are distinct for \( 0 \leq i < s \), i.e., the matrix \( \mathcal{D} \) satisfies condition C1 of Definition \ref{GFR}.
		
		Let \( d_{i,j} \) and \( d_{i,k} \) be two elements in a row \( d_i \) with \( j \neq k \). Then
		\[
		d_{i,j} = U^{(i)}_{j_1, j_2} = a_{i,j_2} + N_1 b_{i,j_1}, \quad
		d_{i,k} = U^{(i)}_{k_1, k_2} = a_{i,k_2} + N_1 b_{i,k_1},
		\]
		where $j_1 = \left\lfloor \frac{j}{n} \right\rfloor$, $j_2 = j (\bmod~ n)$, $k_1 = \left\lfloor \frac{k}{n} \right\rfloor$, and $k_2 = k (\bmod~ n)$.
		
        Since $j\neq k$, the discussion for condition (1) is divided into three cases:
        
		Case 1: \( j_1 = k_1 \) and \( j_2 \neq k_2 \). Since \( b_{i,j_1} = b_{i,k_1} \), \( a_{i,j_2} \neq a_{i,k_2} \), and \( 0 \leq a_{i,j_2}, a_{i,k_2} < N_1 \), we have \( d_{i,j} \neq d_{i,k} \).
		
		Case 2: \( j_1 \neq k_1 \) and \( j_2 = k_2 \). This is similar to Case 1 and hence omitted.
		
		Case 3: \( j_1 \neq k_1 \) and \( j_2 \neq k_2 \). Since \( a_{i,j_2} \neq a_{i,k_2} \) and \( b_{i,j_1} \neq b_{i,k_1} \), and given that \( 0 \leq a_{i,j_2}, a_{i,k_2} < N_1 \), \( 0 \leq b_{i,j_1}, b_{i,k_1} < N_2 \), we have
		\[
		d_{i,j} - d_{i,k} = (a_{i,j_2} - a_{i,k_2}) + N_1(b_{i,j_1} - b_{i,k_1}) \neq 0.
		\]
		Thus, \( d_{i,j} \neq d_{i,k} \) in all cases.
		
		Next, we will prove that the matrix \( \mathcal{D} \) satisfies the condition C2 of Definition \ref{GFR}.
		   Assume that there are two elements $a$ and $b$, where $b$ is $\tau$ steps right of $a$ in two rows of $\mathcal{A}$, say, $u$ and $v$, i.e.,
		\begin{align}
			&d_{u,j} = a,\quad d_{u,j+\tau} = b, \label{21} \\
			&d_{v,k} = a,\quad d_{v,k+\tau} = b, \label{22}
		\end{align}
		where \( a, b \in \mathbb{Z}_{N_1N_2} \), $0<\tau<mn$, and \( 0 \leq j + \tau, k + \tau < nm \).
		
		Let \(\tau=n\tau_1+\tau_2\), where \( \tau_1 = \left\lfloor \frac{\tau}{n} \right\rfloor,\ \tau_2 = \tau (\bmod~ n) \). Define
		\[
		j_1 = \left\lfloor \frac{j}{n} \right\rfloor,\ j_2 = j (\bmod~ n), \quad
		k_1 = \left\lfloor \frac{k}{n} \right\rfloor,\ k_2 = k (\bmod~ n).
		\]
		
		To proceed, we consider two distinct cases based on the value of $\tau_2 + j_2$ ($0 \leq \tau_2 + j_2 < n$ and $n < \tau_2 + j_2 < 2n$):
		
		Case 1: \( 0 \leq \tau_2 + j_2 < n \). We have
		\[
		d_{u,j} = a_{u,j_2} + N_1 b_{u,j_1} = a, \quad
		d_{u,j+\tau} = a_{u,j_2+\tau_2} + N_1 b_{u,\left\lfloor \frac{j+\tau}{n} \right\rfloor} = b,
		\]
		\[
		d_{v,k} = a_{v,k_2} + N_1 b_{v,k_1} = a, \quad
		d_{v,k+\tau} = a_{v,k_2+\tau_2} + N_1 b_{v,\left\lfloor \frac{k+\tau}{n} \right\rfloor} = b.
		\]
		
		Since the ranges of \( a_{i,j} \) and \( b_{i,j} \) ensure uniqueness in representation, equalities \( d_{u,j} = d_{v,k} \) and \( d_{u,j+\tau} = d_{v,k+\tau} \) imply:
		\[
		a_{u,j_2} = a_{v,k_2},\ b_{u,j_1} = b_{v,k_1},\quad
		a_{u,j_2+\tau_2} = a_{v,k_2+\tau_2},\ b_{u,\left\lfloor \frac{j+\tau}{n} \right\rfloor} = b_{v,\left\lfloor \frac{k+\tau}{n} \right\rfloor}.
		\]
		
		Since \( \tau \neq 0 \), we consider the following three subcases:
		
		- If \( \tau_1 = 0\) and \( \tau_2 \neq 0 \), then this contradicts the generalized circular quasi-Florentine rectangles property of \( \mathcal{A} \).
		
		- If \( \tau_1 \neq 0\) and \( \tau_2 = 0 \), then $0<\left\lfloor \frac{j+\tau}{n} \right\rfloor,\left\lfloor \frac{k+\tau}{n} \right\rfloor<m$ by $0<j+\tau,k+\tau<nm$. This contradicts the generalized quasi-Florentine rectangles property of \( \mathcal{B} \).
		
		- If \( \tau_1 \neq 0\) and \( \tau_2 \neq 0 \), then this contradicts both \( \mathcal{A} \) being generalized circular quasi-Florentine rectangles and \( \mathcal{B} \) being generalized quasi-Florentine rectangles.
		
		Therefore, the assumption that both (\ref{21}) and (\ref{22}) hold leads to a contradiction.
		
		Case 2: \( n < \tau_2 + j_2 < 2n \). We have
		\[
		d_{u,j+\tau} = a_{u,j_2+\tau_2 - n} + N_1 b_{u,\left\lfloor \frac{j+\tau}{n} \right\rfloor},\quad
		d_{v,k+\tau} = a_{v,k_2+\tau_2 -n} + N_1 b_{v,\left\lfloor \frac{k+\tau}{n} \right\rfloor}.
		\]
		
		The proof proceeds analogously to Case 1 and is thus omitted.
		
		This completes the proof.
	\end{proof}
	
	Using the generalized quasi-Florentine rectangles, circular Florentine rectangles, and circular quasi-Florentine rectangles given in Theorem~\ref{T1}, Lemmas~\ref{lemma3} and \ref{lemma4}, we obtain the following result, which can be proved in a manner similar to the proof of Theorem~\ref{T2}. Therefore, the proof is omitted.
	
	\begin{corollary} \label{C3}
         Let \( \mathcal{A} \) be a circular Florentine rectangle of order $(p_1-1)\times N_1$ over $\mathbb{Z}_{N_1}$ in Lemma \ref{lemma3}, where $N_1$ is a positive integer and $p_1$ is the smallest prime factor of $N_1$.
		\begin{enumerate}
			\item[i)] Let \( \mathcal{B} \) be a generalized quasi-Florentine rectangle of size $p^{n}\times (p^{n}-c)$ over $\mathbb{Z}_{p^n}$. By Theorem \ref{T2}, $\mathcal{D}$ is a generalized quasi-Florentine rectangle of size $\min\{p_1-1, p^{n}\}\times N_1(p^{n}-c)$ over $\mathbb{Z}_{N_1 p^n}$, where $n$ is a positive integer, $p$ is a prime, and $1\le c<p^n-1$.
			
			\item[ii)]  Let \( \mathcal{B} \) be a generalized quasi-Florentine rectangle of size $p^{n}\times (p^{n}+1-c)$ over $\mathbb{Z}_{p^n+1}$. By Theorem \ref{T2}, $\mathcal{D}$ is a generalized quasi-Florentine rectangle of size $\min\{p_1-1, p^{n}\}\times N_1(p^{n}+1-c)$ over $\mathbb{Z}_{N_1(p^{n}+1)}$,  where $n$ is a positive integer, $p$ is a prime, and $1\le c<p^n$.
		\end{enumerate}
	\end{corollary}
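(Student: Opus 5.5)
The plan is to reduce both parts to Theorem~\ref{T2}, with one adjustment. The circular Florentine rectangle $\mathcal{A}$ of Lemma~\ref{lemma3} plays the role of the inner factor, and the truncated rectangles of Theorem~\ref{T1} play the role of $\mathcal{B}$.

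\textbf{Step 1 (inner factor).} View $\mathcal{A}$ as a generalized circular quasi-Florentine rectangle with $n=N_1$ and no missing symbols. Each row $a_{i,j}=(i+1)j \bmod N_1$ is a permutation of $\mathbb{Z}_{N_1}$, because $i+1<p_1$ is coprime to $N_1$. For circular differences, suppose $(u+1)j\equiv (v+1)k$ and $(u+1)(j+\tau)\equiv (v+1)(k+\tau)$. Subtracting gives $(u-v)\tau\equiv 0 \pmod{N_1}$. Since $0<|u-v|<p_1-1$ is coprime to $N_1$, this forces $\tau\equiv 0$. So C2 holds with steps taken circularly.

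\textbf{Step 2 (outer factor).} For i), take the $p^n\times(p^n-1)$ rectangle of Lemma~\ref{lemma4}. Being circular, it is in particular a quasi-Florentine rectangle, so Theorem~\ref{T1}(ii) applies: deleting $c-1$ leftmost or rightmost columns yields a generalized quasi-Florentine rectangle $\mathcal{B}$ of size $p^n\times(p^n-c)$ over $\mathbb{Z}_{p^n}$. For ii), we need a Florentine rectangle of size $p^n\times(p^n+1)$ over $\mathbb{Z}_{p^n+1}$, as in \cite{avik2021asymptotically}. Deleting $c-1$ columns via Theorem~\ref{T1}(i) gives width $p^n+1-c$, matching the statement's $F_{Q,c-1}(p^n+1)=p^n$. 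In both cases C1 and C2 are inherited directly, since removing end columns only removes occurrences.

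\textbf{Step 3 (apply Construction~\ref{C1}).} Take $s=p_1-1$ and $t=p^n$, so $\mathcal{D}$ has $\min\{p_1-1,p^n\}$ rows over $\mathbb{Z}_{N_1\cdot(\text{order of }\mathcal{B})}$. Its width is $N_1$ times the width of $\mathcal{B}$. Theorem~\ref{T2} as stated has a typo: $\mathcal{B}$ has $m$ columns, not $n$. With that correction, its hypotheses hold by Steps 1--2, and the conclusion follows.

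\textbf{Main obstacle.} The part needing real care is the C2 verification behind Theorem~\ref{T2}, namely the wrap-around case where $j_2+\tau_2\ge n$. Here the outer column index advances by $\tau_1+1$ rather than $\tau_1$. I would re-check it in the present setting rather than simply cite it. Write $\tau=n\tau_1+\tau_2$ and compare $j$ and $k$ in rows $u\ne v$.

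From the $\mathbb{Z}_{N_1}$-components, $a_{u,j_2}=a_{v,k_2}$ and $a_{u,j_2+\tau_2 \bmod n}=a_{v,k_2+\tau_2\bmod n}$. If $\tau_2\ne0$, the circular property of $\mathcal{A}$ forbids this. If $\tau_2=0$, then no wrap occurs. The $\mathcal{B}$-components then give $b_{u,j_1}=b_{v,k_1}$ and $b_{u,j_1+\tau_1}=b_{v,k_1+\tau_1}$ with $\tau_1\ge1$, which contradicts C2 for $\mathcal{B}$.

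Thus circularity is needed only in $\mathcal{A}$, which is exactly what Lemma~\ref{lemma3} supplies. With this, the sub-case analysis in Theorem~\ref{T2} collapses to these two cases.
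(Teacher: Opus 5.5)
Your overall route matches the paper's (omitted) argument: apply Theorem~\ref{T2} with the circular Florentine rectangle of Lemma~\ref{lemma3} as $\mathcal{A}$ and a column-truncated rectangle from Theorem~\ref{T1} as $\mathcal{B}$. Step~1, Step~2 for part i), and your C2 re-check are all correct. The re-check is tighter than the paper's sketch of Theorem~\ref{T2}. By using the inner index $(j_2+\tau_2)\bmod n$ together with circularity of $\mathcal{A}$, you dispose of every $\tau_2\neq 0$ at once. That includes the boundary case $j_2+\tau_2=n$ and the case where the wrap happens in row $u$ but not in row $v$, both of which the paper's case split passes over.

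The flaw is Step~2 for part ii). No $p^n\times(p^n+1)$ Florentine rectangle over $\mathbb{Z}_{p^n+1}$ is supplied by the systematic constructions the paper relies on. Those give only $\max\{4,q-1\}$ rows, where $q$ is the smallest prime factor of $p^n+1$, and $q=2$ whenever $p$ is odd. Getting $p^n$ rows at $N=p^n+1$ is exactly what the quasi-Florentine rectangle of \cite{Avik2024} provides: $F_{Q,1}(p^n+1)=p^n$, with width $p^n$. The intended $\mathcal{B}$ comes from deleting $c-1$ end columns of that rectangle via Theorem~\ref{T1}(ii), which gives width $p^n+1-c$.

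Your column count is also off by one. Deleting $c-1$ columns from width $p^n+1$ leaves width $p^n+2-c$, which is an $F_{Q,c-1}$ object, not width $p^n+1-c$.

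This does not break the proof of the corollary as stated, because $\mathcal{B}$ is a hypothesis (``Let $\mathcal{B}$ be a generalized quasi-Florentine rectangle of size \dots''). Step~2 is therefore logically unnecessary. Step~1, together with your C1/C2 verification for Construction~\ref{C1}, already proves both parts. You should either drop Step~2 or replace its part ii) with the quasi-Florentine source described above.
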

	
	\begin{corollary} \label{C2}
    Let \( \mathcal{A} \) be a circular quasi-Florentine rectangle of order $p^{n}\times (p^{n}-1)$ over $\mathbb{Z}_{p^n}$ in Lemma \ref{lemma4}, where $p$ is a prime number and $n$ is a positive integer.
    \begin{enumerate}
			\item[i)]  Let \( \mathcal{B} \) be a generalized quasi-Florentine rectangle of size $(p_1-1)\times (N_1-c)$ over $\mathbb{Z}_{N_1}$. By Theorem \ref{T2}, $\mathcal{D}$ is a generalized quasi-Florentine rectangle of size $\min\{p_1-1, p^{n}\} \times (N_1-c)(p^{n}-1)$ over $\mathbb{Z}_{N_1p^{n}}$,
            where $N_1$ is a positive integer, $p_1$ is the smallest prime factor of $N_1$, and $0\le c<N_1-1$.
			
			\item[ii)]  Let \( \mathcal{B} \) be a generalized quasi-Florentine rectangle of size $p_1^{n_1}\times (p_1^{n_1}-c)$ over $\mathbb{Z}_{p_1^{n_1}}$.
             By Theorem \ref{T2}, $\mathcal{D}$ is a generalized quasi-Florentine rectangle of size $\min\{p^{n}, p_1^{n_1}\} \times (p^{n}-1)(p_1^{n_1}-c)$ over $\mathbb{Z}_{p^np_1^{n_1}}$, where $p_1$ is a prime, $n_1$ is a positive integer, and $1\le c<p_1^{n_1}-1$.
			
           \item[iii)] Let \( \mathcal{B} \) be a generalized quasi-Florentine rectangle of size $p_1^{n_1}\times (p_1^{n_1}+1-c)$ over $\mathbb{Z}_{p_1^{n_1}+1}$. By Theorem \ref{T2}, $\mathcal{D}$ is a generalized quasi-Florentine rectangle of size $\min\{p^n, p_1^{n_1}\} \times (p^n-1)(p_1^{n_1}+1-c)$ over $\mathbb{Z}_{p^n(p_1^{n_1}+1)}$, where  $p_1$ is a prime, $n_1$ is a positive integer, and $1\le c<p_1^{n_1}$.
		\end{enumerate}

	\end{corollary}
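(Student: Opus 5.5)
Corollary~\ref{C2} is an instance of Theorem~\ref{T2}: in each part we check its hypotheses, with $\mathcal{A}$ the circular quasi-Florentine rectangle of Lemma~\ref{lemma4} (size $p^n\times(p^n-1)$ over $\mathbb{Z}_{p^n}$) and $\mathcal{B}$ the given generalized quasi-Florentine rectangle. A circular quasi-Florentine rectangle is a generalized circular quasi-Florentine rectangle with $n$ replaced by $p^n-1$ columns, i.e.\ $p^n-(p^n-1)=1$ missing symbol per row. Thus Theorem~\ref{T2} applies with $N_1\to p^n$, column count $p^n-1$, and $\widetilde{F}_{Q,1}(p^n)=p^n$ rows. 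The product $\mathcal{D}$ of Construction~\ref{C1} then lives over $\mathbb{Z}_{p^n N_2}$, has $\min\{p^n,\text{rows of }\mathcal{B}\}$ rows and $(p^n-1)\cdot(\text{columns of }\mathcal{B})$ columns.

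The three parts differ only in the choice of $\mathcal{B}$, and in each we verify it is a generalized quasi-Florentine rectangle with the stated dimensions.
\begin{itemize}
\item[i)] Take the circular Florentine rectangle of Lemma~\ref{lemma3}, of size $(p_1-1)\times N_1$; it is in particular a Florentine rectangle. Deleting $c$ end columns via Theorem~\ref{T1}(i) gives $(p_1-1)\times(N_1-c)$ over $\mathbb{Z}_{N_1}$, with $N_2=N_1$ and $0\le c<N_1-1$.
\item[ii)] Take the $p_1^{n_1}\times(p_1^{n_1}-1)$ quasi-Florentine rectangle of Lemma~\ref{lemma4}. Deleting $c-1$ end columns via Theorem~\ref{T1}(ii) gives $p_1^{n_1}\times(p_1^{n_1}-c)$, with $N_2=p_1^{n_1}$.
\item[iii)] Take a quasi-Florentine rectangle of size $p_1^{n_1}\times p_1^{n_1}$ over $\mathbb{Z}_{p_1^{n_1}+1}$ from the systematic construction of \cite{Avik2024}, or a Florentine rectangle of that order (the same source as in Corollary~\ref{C3}(ii)). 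Truncating via Theorem~\ref{T1}(ii) gives $p_1^{n_1}\times(p_1^{n_1}+1-c)$.
\end{itemize}
Substituting these parameters into Theorem~\ref{T2} yields exactly the row counts $\min\{\cdot,\cdot\}$, the column counts $(p^n-1)(\cdots)$, and the alphabets $\mathbb{Z}_{N_1p^n}$, $\mathbb{Z}_{p^np_1^{n_1}}$ and $\mathbb{Z}_{p^n(p_1^{n_1}+1)}$.

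The main obstacle is the mismatch in the row index of Construction~\ref{C1}, which only uses rows $i<\min\{s,t\}$. Theorem~\ref{T2}'s C2 argument compares rows $u,v$ of $\mathcal{D}$ through the same indices in $\mathcal{A}$ and $\mathcal{B}$, so we must check that restricting both factors to their first $\min\{s,t\}$ rows preserves C1, circular C2 for $\mathcal{A}$, and C2 for $\mathcal{B}$. This holds because these are "at most one row" conditions, which survive deleting rows. We will also redo the subcase $\tau_1\neq 0,\tau_2=0$ carefully: there the equalities $b_{u,j_1}=b_{v,k_1}$ and $b_{u,j_1+\tau_1}=b_{v,k_1+\tau_1}$ contradict C2 for $\mathcal{B}$ only when $u\neq v$. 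The case $u=v$ is excluded by C1 for $\mathcal{D}$, which forces $j=k$. Once these points are settled, the corollary follows by direct substitution.
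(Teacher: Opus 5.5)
Your proposal is correct and follows the paper's route. The corollary is a direct instantiation of Theorem~\ref{T2}, with $\mathcal{A}$ the circular quasi-Florentine rectangle of Lemma~\ref{lemma4} (so $N_1\to p^n$, with $p^n-1$ columns and $p^n$ rows) and $\mathcal{B}$ the given rectangle, which is what the paper means by ``proved in a manner similar to Theorem~\ref{T2}''; your extra checks (restriction to the first $\min\{s,t\}$ rows, and existence of $\mathcal{B}$ via Theorem~\ref{T1} with Lemmas~\ref{lemma3} and \ref{lemma4}) are harmless but not required, since $\mathcal{B}$ is a hypothesis, and the hedge ``or a Florentine rectangle of that order'' in iii) can simply be dropped.
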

	

	\begin{example}
		Let the matrix $\mathcal{A}$ be a circular Florentine rectangle of order $6\times 7$ over $\mathbb{Z}_7$, and let the matrix $\mathcal{B}$ be a quasi-Florentine rectangle of order $9\times 8$ over $\mathbb{Z}_9$. 
		\[\mathcal{A}=\left[\begin{array}{lllllll}
			0 & 1 & 2 & 3 & 4 & 5 & 6  \\
			0 & 2 & 4 & 6 & 1 & 3 &5 \\
			0 & 3 & 6 & 2 & 5 & 1 & 4  \\
			0 & 4 & 1 & 5 & 2 & 6 & 3  \\
			0 & 5 & 3 & 1 & 6 & 4 & 2 \\
			0 & 6 & 5 & 4 & 3 & 2 & 1 
		\end{array}\right]_{6 \times 7},~~~ \mathcal{B}=\left[\begin{array}{llllllll}
			1&2&4&3&6&7&5&8 \\
			0&3&5&2&7&6&4&8 \\
			3&0&6&1&4&5&7&8\\
			5&6&0&7&2&3&1&8\\
			2&1&7&0&5&4&6&8\\
			7&4&2&5&0&1&3&8\\
			6&5&3&4&1&0&2&8\\
			4&7&1&6&3&2&0&8\\
		\end{array}\right]_{8 \times 8}.\]
		It is easy to check that the matrix $\mathcal{D}=[d_0;d_1;d_2;d_3;d_4;d_5]$ in Construction \ref {C1} is a generalized quasi-Florentine rectangle over $\mathbb{Z}_{63}$, with $F_{Q,7}(63)=6$, where
		\[\begin{aligned}
			d_0=(&7~8~9~10~11~12~13~14~15~16~17~18~19~20~28~29~30~31~32~33~34~21~22~23~24~25~26~27~42\\&43~44~45~46~47~48~49~50~51~52~53~54~55~35~36~37~38~39~40~41~56~57~58~59~60~61~62);
		\end{aligned}\]
		\[\begin{aligned}
			d_1=(&0~2~4~6~1~3~5~21~23~25~27~22~24~26~35~37~39~41~36~38~40~14~16~18~20~15~17~19~49~51\\&53~55~50~52~54~42~44~46~48~43~45~47~28~30~32~34~29~31~33~56~58~60~62~57~59~61);
		\end{aligned}\]
		\[\begin{aligned}
			d_2=(&21~24~27~23~26~22~25~0~3~6~2~5~1~4~42~45~48~44~47~43~46~7~10~13~9~12~8~11~28~31~34\\&30~33~29~32~35~38~41~37~40~36~39~49~52~55~51~54~50~53~56~59~62~58~61~57~60);
		\end{aligned}\]
		\[\begin{aligned}
			d_3=(&35~39~36~40~37~41~38~42~46~43~47~44~48~45~0~4~1~5~2~6~3~49~53~50~54~51~55~52~14~18\\&15~19~16~20~17~21~25~22~26~23~27~24~7~11~8~12~9~13~10~56~60~57~61~58~62~59);
		\end{aligned}\]
		\[\begin{aligned}
			d_4=(&14~19~17~15~20~18~16~7~12~10~8~13~11~9~49~54~52~50~55~53~51~0~5~3~1~6~4~2~35~40~38\\&36~41~39~37~28~33~31~29~34~32~30~42~47~45~43~48~46~44~56~61~59~57~62~60~58);
		\end{aligned}\]
		\[\begin{aligned}
			d_5=(&49~55~54~53~52~51~50~28~34~33~32~31~30~29~14~20~19~18~17~16~15~35~41~40~39~38~37~36\\&~0~6~5~4~3~2~1~7~13~12~11~10~9~8~21~27~26~25~24~23~22~56~62~61~60~59~58~57).
		\end{aligned}\]
	\end{example}
	
	\begin{remark}
		Among the existing Florentine rectangle parameters with a mathematical construction method, the number of rows in the matrix is relatively small, typically given by $F(N)=\max\{4,p-1\}$, where $p$ is the smallest prime factor of $N$. For circular Florentine rectangle parameters with a mathematical construction method, $\widetilde{F}(N)=p-1$ \cite{song1992aspects}. In particular, when $N$ is even, $\widetilde{F}(N)=1$.  For quasi-Florentine rectangles, when $N=p^n$ and $N=p^n+1$, it has been shown that $F_{Q,1}(N)=p^n$.  For other values of $N$, the number of rows remains $F_{Q,1}(N)=F(N)=\max\{4,p-1\}$ \cite{Avik2024}. Thus, for certain other values of $N$ that are not in the form $p$, $p-1$, $p^n$, or $p^n+1$, the generalized quasi-Florentine rectangles proposed in this work can achieve a greater number of rows. Table~\ref{table_2} presents specific examples of such parameters.
	\end{remark}

	\begin{table}[htp]
		\caption{Parameters of the Proposed Generalized Quasi-Florentine Rectangles}
		\label{table_2}
		\renewcommand{\arraystretch}{1.3} 
		\begin{center}
			\begin{tabular}{|c|c|c|c|c|c|}
				\hline
				$N$ & $L$ & $\widetilde{F}(N)$& $F(N)$ & $F_{Q,1}(N)$ & $F_{Q,N-L}(N)$ \\
				\hline
				$2^4\times(3^2+1)$	 & $(2^4-1)\times3^2$  & $1$ & $4$  & $4$ & $9$ \\
				\hline
				$2^5\times(5^2+1)$	& $(2^5-1)\times 5^2$& $1$ & $4$ & $4$  & $25$ \\
				\hline
				$3^3\times (7^2+1)$  	& $(3^3-1)\times 7^2$& $1$ & $4$ & $4$ &  $27$\\
				\hline
				$37\times (2^5+1)$ & $37\times 2^5$ &  $2$& $4$ &  $4$ &  $32$\\
				\hline
				$47\times 7^2$ & $47\times (7^2-1)$ & $6$ & $6$& $6$ & $46$ \\
				\hline
				$61\times 2^6$	& $61\times (2^6-1)$ &  $1$& $4$ & $4$ & $60$ \\
				\hline
			\end{tabular}
		\end{center}
	\end{table}

	\begin{lemma} \label{LemmaGFR}
		Let \( \mathcal{A} \) be a generalized quasi-Florentine rectangle of order $F_{Q,N-n}(N)\times n$ over $\mathbb{Z}_N$, and let \( a_i \) be the \( i \)-th row of \( \mathcal{A} \). For any pair of distinct indices \( 0 \leq i \ne p < F_{Q,N-n}(N) \), the equation \( a_{i,j} = a_{p,j+\tau} \) has at most one solution over the range  \(0\leq j<n-\tau\), for every  \( 0 \leq \tau < n \).
	\end{lemma}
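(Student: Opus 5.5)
The plan is a direct argument by contradiction that uses only the two defining conditions C1 and C2 of Definition~\ref{GFR}; no earlier construction is needed. Fix distinct rows $i\neq p$ and a shift $0\le\tau<n$. Suppose the equation $a_{i,j}=a_{p,j+\tau}$ has two different solutions $j<j'$, both in the range $0\le j,j'<n-\tau$. I will show that the pair of symbols these solutions produce violates C2.

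First, name the symbols. Set $x:=a_{i,j}=a_{p,j+\tau}$ and $y:=a_{i,j'}=a_{p,j'+\tau}$. Every index used is valid: $j,j'<n-\tau$ gives $j+\tau,\,j'+\tau<n$. Since $j\neq j'$ are two different positions in row $i$, condition C1 gives $x\neq y$. So $(x,y)$ is an ordered pair of distinct symbols.

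Next, compare the positions in the two rows. Let $m:=j'-j$. Then $1\le m\le n-\tau-1\le n-1$, so $m$ is an admissible step size in C2.
\begin{itemize}
\item In row $i$, $y$ sits at position $j'=j+m$ and $x$ at position $j$, so $y$ is $m$ steps right of $x$.
\item In row $p$, $y$ sits at position $j'+\tau=(j+\tau)+m$ and $x$ at position $j+\tau$, so $y$ is again $m$ steps right of $x$.
\end{itemize}
Thus two distinct rows $i\neq p$ both contain $y$ exactly $m$ steps right of $x$. This contradicts C2, so there is at most one solution $j$.

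There is no real obstacle here. The only points to check are that $x\neq y$, that $m$ lies in $[1,n-1]$, and that all indices stay inside the row; each is immediate from the range $0\le j<n-\tau$. The case $\tau=0$ needs no separate treatment, since the argument uses only $i\neq p$. Note also that C2 is applied in its non-circular form, so the lemma holds for every generalized quasi-Florentine rectangle, not only circular ones.
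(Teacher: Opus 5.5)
Your proof is correct and takes the same approach as the paper. Two distinct solutions $j\neq j'$ place the same ordered pair of distinct symbols at the same separation $|j'-j|$ in the two different rows $i\neq p$, which contradicts C2. Your version spells out what the paper leaves implicit: distinctness of the symbols via C1, that the step $m$ lies in the admissible range $[1,n-1]$, and the case $\tau=0$, which the paper's argument nominally excludes by assuming $0<\tau<n$.
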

	\begin{proof}
		Suppose \( 0 \leq i \ne p < F_{Q,N-n}(N) \), and that there exists some \( 0 < \tau < n \) such that the equation \( a_{i,j} = a_{p,j+\tau} \) (with \( 0 \leq j + \tau < n \)) has two solutions, say \( j_1 \) and \( j_2 \). Then we have
		\[
		a_{i,j_1} = a_{p,j_1+\tau}, \quad a_{i,j_2} = a_{p,j_2+\tau}.
		\]
		This implies
		\[
		(a_{i,j_1}, a_{i,j_2}) = (a_{p,j_1+\tau}, a_{p,j_2+\tau}),
		\]
		which contradicts the definition of a generalized quasi-Florentine rectangle. Therefore, for any \( 0 \leq \tau < n \), \( 0 \leq j + \tau < n \), and \( 0 \leq i \ne p < F_{Q,N-n}(N) \), the equation \( a_{i,j} = a_{p,j+\tau} \) has at most one solution.
	\end{proof}
	
	%
	%

	\section{A Construction of aperiodic DRCS sets} \label{5}
	In this section, we propose some new aperiodic DRCS sets that are asymptotically optimal with respect to the bound in Lemma \ref{lem-wangbound}.  First, we present the construction framework.
	
	\begin{construction}\label{c1}
		Let $N\geq 2$ be a positive integer such that a $K\times L$ generalized quasi Florentine rectangle $\mathcal{A}$ exists over $\mathbb{Z}_N$, where $K=F_{Q,N-L}(N)$. Let $a_{i,j}$ denotes the $j$-th element of the $i$-th row of $\mathcal{A}$. For a positive integer $r$, let $\mathbf{B}$ be a Butson-type Hadamard matrix of order $N$ over $\mathbb{Z}_r$, given by
		\begin{align*}
			\mathbf{B}=\left[\begin{array}{c}
				\mathbf{b}_0 \\
				\mathbf{b}_1 \\
				\vdots \\
				\mathbf{b}_{N-1}
			\end{array}\right]=\left[\begin{array}{cccc}
				\omega_r^{b_{0,0}} & \omega_r^{b_{0,1}} & \cdots & \omega_r^{b_{0, N-1}} \\
				\omega_r^{b_{1,0}} & \omega_r^{b_{1,1}} & \cdots & \omega_r^{b_{1, N-1}} \\
				\vdots & \vdots & \ddots & \vdots \\
				\omega_r^{b_{N-1,0}} & \omega_r^{b_{N-1,1}} & \cdots & \omega_r^{b_{N-1, N-1}}
			\end{array}\right].
		\end{align*}
		Define a DRCS set $\mathcal{C}=\{\mathbf{C}^{(0)},\mathbf{C}^{(1)},\cdots,\mathbf{C}^{(K-1)}\}$, where
		
		\begin{equation}
			\mathbf{C}^{(k)}=\left[\begin{array}{c}
				\mathbf{c}_0^{(k)} \\
				\mathbf{c}_1^{(k)}\\
				\vdots \\
				\mathbf{c}_{N-1}^{(k)}
			\end{array}\right]=\left[\begin{array}{c}
				c_{0,0}^{(k)}, ~~~c_{0,1}^{(k)}, ~\cdots, ~~~c_{0, L-1}^{(k)} \\
				c_{1,0}^{(k)}, ~~~c_{1,1}^{(k)}, ~\cdots, ~~~c_{1, L-1}^{(k)} \\
				~~~		\vdots \\
				c_{N-1,0}^{(k)}, c_{N-1,1}^{(k)}, \cdots, c_{N-1, L-1}^{(k)}
			\end{array}\right],
		\end{equation}
		and
		\begin{align*}
			c_{m, n}^{(k)}=\omega_r^{b_{a_{k,n},m}}
		\end{align*}
		for $0\leq k<K$, $0 \leq m<N$, and $0 \leq n<L$.
	\end{construction}
	
	For the sequence sets generated by Construction \ref{c1}, we have the following result.

	\begin{theorem}\label{thc2} 
		The sequence set $\mathcal{C}$ given by Construction \ref{c1} is an aperiodic DRCS set with parameters  $\left(K, N, L, N, \Pi\right)$-DRCS, where $\Pi=(-L,L)\times (-L, L)$. 
	\end{theorem}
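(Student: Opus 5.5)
The plan is to compute the aperiodic AF sum of Construction~\ref{c1} in closed form by summing over the flock index $m$ first. Then I use the orthogonality of the rows of $\mathbf{B}$ to reduce everything to counting coincidences between rows of the generalized quasi-Florentine rectangle $\mathcal{A}$.

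Since the sequences have length $L$, I read the Doppler factor as $\omega_L^{\nu t}$. Fix $0\le k_1,k_2<K$ and $0\le\tau<L$; the case $\tau<0$ follows by exchanging the roles of $k_1,k_2$ and conjugating. By definition,
\begin{align*}
\hat{AF}_{\mathbf{C}^{(k_1)},\mathbf{C}^{(k_2)}}(\tau,\nu)
=\sum_{n=0}^{L-1-\tau}\omega_L^{\nu n}\sum_{m=0}^{N-1}\omega_r^{b_{a_{k_1,n},m}-b_{a_{k_2,n+\tau},m}} .
\end{align*}
The inner sum is the inner product of rows $\mathbf{b}_{a_{k_1,n}}$ and $\mathbf{b}_{a_{k_2,n+\tau}}$ of $\mathbf{B}$. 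Because $\mathbf{B}\mathbf{B}^H=N\mathbf{I}$, it equals $N$ if $a_{k_1,n}=a_{k_2,n+\tau}$ and $0$ otherwise. Hence
\begin{align*}
\hat{AF}_{\mathbf{C}^{(k_1)},\mathbf{C}^{(k_2)}}(\tau,\nu)=N\sum_{n\in S}\omega_L^{\nu n},
\end{align*}
where $S=\{0\le n<L-\tau:\ a_{k_1,n}=a_{k_2,n+\tau}\}$. This identity is the key step; everything else is a case analysis on $|S|$.

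\emph{Auto-AF ($k_1=k_2=k$).}
\begin{itemize}
\item If $\tau\neq 0$, condition C1 (distinct symbols in a row) forces $S=\emptyset$, so the AF vanishes.
\item If $\tau=0$, then $S=\{0,\dots,L-1\}$, and the AF equals $N\sum_{n=0}^{L-1}\omega_L^{\nu n}$. This is $0$ for $0<|\nu|<L$.
\end{itemize}
So the auto-AF is zero on $\Pi\setminus\{(0,0)\}$.

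\emph{Cross-AF ($k_1\neq k_2$).}
\begin{itemize}
\item For $0<\tau<L$, Lemma~\ref{LemmaGFR} gives $|S|\le 1$.
\item For $\tau=0$, I argue directly from C2. Suppose $n_1<n_2$ both lie in $S$. Then symbol $a_{k_1,n_2}$ is $n_2-n_1$ steps right of $a_{k_1,n_1}$ in both rows $k_1$ and $k_2$, which is a contradiction.
\item Negative $\tau$ is handled symmetrically by swapping the two rows.
\end{itemize}
Thus $|\hat{AF}|\le N|S|\le N$ throughout $\Pi$, which gives $\hat{\theta}_{\max}=N$ and parameters $(K,N,L,N,\Pi)$.

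The main obstacle I expect is bookkeeping rather than depth. I must verify that the Doppler normalization really is $\omega_L$; with $\omega_N$ the $\tau=0$ auto term would not cancel. I must also confirm that the $\tau=0$ cross case and the negative shifts are genuinely covered by C2 and Lemma~\ref{LemmaGFR}, since that lemma is stated only for nonnegative $\tau$.
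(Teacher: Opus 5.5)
Your proposal is correct and follows essentially the same route as the paper: sum over the flock index $m$ first, use $\mathbf{B}\mathbf{B}^H=N\mathbf{I}$ to reduce the AF to $N\sum_{n\in S}\omega_L^{\nu n}$, then apply C1 to the auto-AF and Lemma~\ref{LemmaGFR} ($|S|\le 1$) to the cross-AF, with the Doppler factor read as $\omega_L$ exactly as the paper does. Your explicit handling of the $\tau=0$ cross case and of negative shifts (swap $k_1,k_2$ and send $\nu\mapsto-\nu$) fills in points the paper leaves implicit; note only that $|\hat{AF}|\le N$ gives $\hat{\theta}_{\max}\le N$, and equality holds once two rows of $\mathcal{A}$ share a symbol, which is automatic by pigeonhole when $KL>N$.
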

	\begin{proof}
		According to the definition of DRCS set, we divide the proof into two cases: auto-AF and cross-AF.
		
		Case 1 (auto-$\hat{AF}$): For any $0 \leq k<K$, we have
		\begin{align}\label{coneq2}
			\begin{split}
				\hat{AF}_{\mathbf{C}^{(k)}}(\tau, \nu) & =\sum_{m=0}^{N-1} \sum_{n=0}^{L-1-\tau} 
				c_{m, n}^{(k)} (c_{m, n+\tau}^{(k)})^{*}\omega_{L}^{\nu n}\\ 
				&=\sum_{m=0}^{N-1}\sum_{n=0}^{L-1-\tau}
				\omega_r^{b_{a_{k,n},m}}\omega_r^{-b_{a_{k,n+\tau},m}}\omega_{L}^{\nu n}\\
				&=\sum_{n=0}^{L-1-\tau}\omega_{L}^{\nu n}\sum_{m=0}^{N-1}\omega_r^{b_{a_{k,n},m}-b_{a_{k,n+\tau},m}}.
			\end{split}
		\end{align}
		
		When $\tau=0$ and $\nu=0$, we have
		$$
		\hat{AF}_{\mathbf{C}^{(k)}}(0, 0)=NL.
		$$
		
		When $\tau=0$ and $0<|\nu|<N-1$, we find that
		$$
		\left|\hat{AF}_{\mathbf{C}^{(k)}}(0, \nu)\right|=\left|N\sum_{n=0}^{L-1}\omega_{L}^{\nu n}\right|= 0.
		$$
		
		When $0<|\tau|<L$ and $0\leq |\nu|<L$, since $a_{k,n}\neq a_{k,n+\tau}$, the vectors $\mathbf{b}_{a_{k,n}}$ and $\mathbf{b}_{a_{k,n+\tau}}$ are two distinct rows of a Butson-type Hadamard matrix, and they are orthogonal to each other. Therefore, we have
		\begin{align*}
			\sum_{m=0}^{N-1}\omega_r^{b_{a_{k,n},m}-b_{a_{k,n+\tau},m}} =0.
		\end{align*}
		Thus
		\begin{align*}
			\left|\hat{AF}_{\mathbf{C}^{(k)}}(\tau, \nu)\right|=0.
		\end{align*}
		
		Case 2 (cross-$\hat{AF}$): For any $0 \leq k_1\neq k_2<K$, we have
		\begin{align*}
			\begin{split}
				\hat{A F}_{\mathbf{C}^{(k_1)},\mathbf{C}^{(k_2)}}(\tau, \nu) & 
				=\sum_{m=0}^{N-1} \sum_{n=0}^{L-1-\tau} 
				c_{m, n}^{(k_1)} (c_{m, n+\tau}^{(k_2)})^{*}\omega_{L}^{\nu n}\\ 
				&=\sum_{m=0}^{N-1}\sum_{n=0}^{L-1-\tau}
				\omega_r^{b_{a_{k_1,n},m}}\omega_r^{-b_{a_{k_2,n+\tau},m}}\omega_{L}^{\nu n}\\
				&=\sum_{n=0}^{L-1-\tau}\omega_{L}^{\nu n}\sum_{m=0}^{N-1}\omega_r^{b_{a_{k_1,n},m}-b_{a_{k_2,n+\tau},m}}.
			\end{split}
		\end{align*}
		Since
		\begin{align}\label{solution_2}
			a_{k_1,n}=a_{k_2,n+\tau}
		\end{align}
		has at most one solution over $0\leq n<L-\tau$ for $0\leq \tau<L$ and $0 \leq k_1 \neq k_2<K$ by the Lemma \ref{LemmaGFR}. If (\ref{solution_2}) does not have a solution, then
		\begin{align*}
			\hat{A F}_{\mathbf{C}^{(k_1)},\mathbf{C}^{(k_2)}}(\tau, \nu)=0.
		\end{align*}
		If (\ref{solution_2}) has exactly one solution, we denote it as $n'$. The vectors $\mathbf{b}_{a_{k_1,n}}$ and $\mathbf{b}_{a_{k_2,n+\tau}}$ are two distinct rows of a Butson-type Hadamard matrix, and they are orthogonal to each other. Therefore, we have
		\begin{align*}
			\hat{A F}_{\mathbf{C}^{(k_1)},\mathbf{C}^{(k_2)}}(\tau, \nu)
			&=N\omega_{L}^{\nu n'}+\sum_{n=0,n\neq n'}^{L-1-\tau}\omega_{L}^{\nu n}
			\sum_{m=0}^{N-1}\omega_r^{b_{a_{k_1,n},m}-b_{a_{k_2,n+\tau},m}} \nonumber \\
			&=N\omega_{L}^{\nu n'}.
		\end{align*}
		Namely, $|\hat{AF}_{\mathbf{C}^{(k_1)},\mathbf{C}^{(k_2)}}(\tau, \nu)|=N$ for all $k_1 \neq k_2$, $0\leq |\tau|<Z_x$ and $0 \leq \nu<L$. This completes the proof.
	\end{proof}
	
	\begin{remark}\label{remark5}
		Notably, when the generalized quasi-Florentine rectangle reduces to the quasi-Florentine rectangle as defined in \cite{Avik2024}, the proposed \( (K, N, N-1, N, \Pi) \)-DRCS reduces to the construction by Wang \textit{et al.} in \cite{wang2025doppler}, where \( \Pi = (-N+1, N-1) \times (-N+1, N-1) \). Furthermore, when the Butson-type Hadamard matrix is selected as the DFT matrix and the generalized quasi-Florentine rectangle chooses a circular Florentine rectangle, the proposed \( (K, N, N, N, \Pi) \)-DRCS reduces to the construction by Shen \textit{et al.} in \cite{shen2024}, where \( \Pi = (-N, N) \times (-N, N) \).
	\end{remark}
	
	Next, we analyze the DRCS set using some specific types of generalized quasi-Florentine rectangles provided in Corollaries \ref{C3} and \ref{C2} under Theorem \ref{thc2}. Similar analyses can be carried out for other parameter choices, and the corresponding results can be established through arguments analogous to those used in the proof of Theorem \ref{thc2}. Therefore, we omit the detailed proofs.

	\begin{corollary}\label{CT2}
		The set $\mathcal{C}$ constructed in Theorem \ref{thc2}, based on the generalized quasi-Florentine rectangles provided in Corollaries \ref{C3} and \ref{C2}, forms an aperiodic DRCS set with the following parameters:
		\begin{enumerate}
			\item[i)]  When $\mathcal{A}$ is a generalized quasi-Florentine rectangle of size $K\times N_1(p^{n}-c)$ over $\mathbb{Z}_{N_1p^n}$ in Corollary \ref{C3}, $\mathcal{C}$ is a $(K,N_1 p^{n}, N_1(p^{n}-c), N_1 p^{n}, \Pi)$-DRCS, where $\Pi=(-N_1(p^{n}-c),N_1(p^{n}-c))\times (-N_1(p^{n}-c), N_1(p^{n}-c))$, $K=\min\{p_1-1, p^{n}\}$, $p$ is a prime, $p_1$ is the smallest prime factor of $N_1$, $N_1$ and $n$ are positive integers, and $1\le c <p^n-1$.
			\item[ii)] When $\mathcal{A}$ is a generalized quasi-Florentine rectangle of size $K\times N_1(p^{n}+1-c)$ over $\mathbb{Z}_{N_1(p^{n}+1)}$ in Corollary \ref{C3}, $\mathcal{C}$ is a $(K,N_1(p^{n}+1), N_1(p^{n}+1-c), N_1(p^{n}+1), \Pi)$-DRCS, where $\Pi=(-N_1(p^{n}+1-c),N_1(p^{n}+1-c))\times (-N_1(p^{n}+1-c), N_1(p^{n}+1-c))$, $K=\min\{p_1-1, p^{n}\}$, $p$ is a prime, $p_1$ is the smallest prime factor of $N_1$, $N_1$ and $n$ are positive integers, and $1\le c<p^n$.  
			\item[iii)] When $\mathcal{A}$ is a generalized quasi-Florentine rectangle of size $K\times (N_1-c)(p^{n}-1)$ over $\mathbb{Z}_{N_1p^n}$ in Corollary \ref{C2}, $\mathcal{C}$ is a $(K,N_1p^{n}, (N_1-c)(p^{n}-1), N_1p^{n}, \Pi)$-DRCS, where $\Pi=(-(N_1-c)(p^{n}-1),(N_1-c)(p^{n}-1))\times (-(N_1-c)(p^{n}-1), (N_1-c)(p^{n}-1))$, $K=\min\{p_1-1, p^{n}\}$, $p$ is a prime, $p_1$ is the smallest prime factor of $N_1$, $N_1$ and $n$ are positive integers, and $0\le c<N_1-1$.
			\item[IV)] When $\mathcal{A}$ is a generalized quasi-Florentine rectangle of size $K\times (p^n-1)(p_1^{n_1}-c)$ over $\mathbb{Z}_{p^np_1^{n_1}}$ in Corollary \ref{C2}, $\mathcal{C}$ is a $(K,p^{n}p_1^{n_1}, (p^{n}-1)(p_1^{n_1}-c), p^{n}p_1^{n_1}, \Pi)$-DRCS, where $\Pi=(-(p^{n}-1)(p_1^{n_1}-c),(p^{n}-1)(p_1^{n_1}-c))\times (-(p^{n}-1)(p_1^{n_1}-c), (p^{n}-1)(p_1^{n_1}-c))$, $K=\min\{p^{n}, p_1^{n_1}\}$, $p$ and $p_1$ are primes, $n$ and $n_1$ are positive integers, and $1\le c<p^{n_1}-1$.
			\item[V)] When $\mathcal{A}$ is a generalized quasi-Florentine rectangle of size $K\times (p^n-1)(p_1^{n_1}+1-c)$ over $\mathbb{Z}_{p^n(p_1^{n_1}+1)}$ in Corollary \ref{C2}, $\mathcal{C}$ is a $(K,p^n(p_1^{n_1}+1), (p^n-1)(p_1^{n_1}+1-c), p^n(p_1^{n_1}+1), \Pi)$-DRCS, where $\Pi=(-(p^n-1)(p_1^{n_1}+1-c),(p^n-1)(p_1^{n_1}+1-c)\times (-(p^n-1)(p_1^{n_1}+1-c), (p^n-1)(p_1^{n_1}+1-c))$, $K=\min\{p^{n}, p_1^{n_1}\}$, $p$ and $p_1$ are primes, $n$ and $n_1$ are positive integers, and $1\le c<p_1^{n_1}$.
		\end{enumerate}
	\end{corollary}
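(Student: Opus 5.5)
The plan is to derive Corollary~\ref{CT2} as a direct specialization of Theorem~\ref{thc2}. The general theorem says that any $K\times L$ generalized quasi-Florentine rectangle over $\mathbb{Z}_N$, together with any $BH(N,r)$, gives a $(K,N,L,N,\Pi)$-DRCS set with $\Pi=(-L,L)\times(-L,L)$. For each of the five items, I would therefore do three things: identify the rectangle $\mathcal{A}$ from Corollary~\ref{C3} or Corollary~\ref{C2}, read off its alphabet size $N$, its number of columns $L$ and its number of rows $K$, and substitute these into Theorem~\ref{thc2}. The only extra ingredient is a Butson-type Hadamard matrix of order $N$. Such a matrix always exists, for example the DFT matrix $BH(N,N)$, and Lemma~\ref{BH_lemma} can supply smaller alphabets when the factors admit them.

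For item i), Corollary~\ref{C3} i) gives $\mathcal{A}$ over $\mathbb{Z}_{N_1p^n}$ with $L=N_1(p^n-c)$ and $K=\min\{p_1-1,p^n\}$. Theorem~\ref{thc2} then yields the flock size $N_1p^n$, sequence length $N_1(p^n-c)$, $\hat\theta_{\max}=N_1p^n$, and the stated $\Pi$. Item ii) follows the same way from Corollary~\ref{C3} ii), with $N=N_1(p^n+1)$ and $L=N_1(p^n+1-c)$. Items iii), IV) and V) come from Corollary~\ref{C2} i), ii) and iii), giving respectively $L=(N_1-c)(p^n-1)$, $L=(p^n-1)(p_1^{n_1}-c)$ and $L=(p^n-1)(p_1^{n_1}+1-c)$, with the corresponding $N$ and $K$ copied from those statements.

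Before substituting, I must confirm that the hypotheses of Corollaries~\ref{C3} and \ref{C2} are actually met, namely that the $\mathcal{B}$ factors exist. For this I would invoke Theorem~\ref{T1} applied to the quasi-Florentine rectangles of size $p^n\times(p^n-1)$ and $p^n\times p^n$ from \cite{Avik2024}, and to the Florentine rectangle of Lemma~\ref{lemma3}. Removing columns from these gives rectangles with the stated number of rows and the widths $p^n-c$, $p^n+1-c$ and $N_1-c$. The circular factor $\mathcal{A}$ is supplied by Lemma~\ref{lemma3} in Corollary~\ref{C3} and by Lemma~\ref{lemma4} in Corollary~\ref{C2}, so Theorem~\ref{T2} applies to each pair.

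I expect the main obstacle to be this bookkeeping rather than any new correlation computation. There are two specific checks. First, the rows $K$ of the product rectangle must be the minimum of the row counts of the two factors, and they must match what is listed. Second, the ranges of $c$ must keep $L\ge 2$ and keep the column-removal step of Theorem~\ref{T1} valid. A further point is that Theorem~\ref{T2} is stated with $\mathcal{B}$ having $n$ columns, which looks like a typo for $m$. I would read it with $m$ columns, so that the width of $\mathcal{D}$ is $nm$, which is exactly the $L$ used above. Once these parameters are fixed, the AF values ($NL$ at the origin, $0$ for the auto-AF elsewhere, and at most $N$ for the cross-AF) follow verbatim from the proof of Theorem~\ref{thc2} via Lemma~\ref{LemmaGFR}.
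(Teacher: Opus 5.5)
Your proposal is correct and follows the paper's (omitted) argument: Corollary~\ref{CT2} is Theorem~\ref{thc2} instantiated with the product rectangles of Corollaries~\ref{C3} and~\ref{C2}, reading off $N$, $L$ and $K$ for each item. Your extra checks are accurate and supply bookkeeping the paper leaves implicit. They are: that the $\mathcal{B}$ factors exist via Theorem~\ref{T1} applied to the $p^n\times(p^n-1)$ and $p^n\times p^n$ quasi-Florentine rectangles and to the rectangle of Lemma~\ref{lemma3}; that the circular factor occupies the fast-varying position required by Theorem~\ref{T2}; and that $\mathcal{B}$ there should have $m$ columns.
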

	
	In the following, we show an example to illustrate the proposed construction.
	\begin{example}\label{ex1}
		Let $N=63$, $L=56$, $K=6$, $\mathcal{A}$ be a generalized quasi-Florentine rectangle of over $\mathbb{Z}_{63}$ with $F_{Q,7}(63)=6$,  and let $\mathbf{B}$ be a $BH(63,3)$ by Lemma \ref{BH_lemma}.
		The sequence set $\mathcal{C}$ is constructed by using Construction 1. It is easy to check that $\mathcal{C}$ forms an aperiodic $(6,63,56,63,\Pi)$-DRCS set over $\mathbb{Z}_3$, where $\Pi = (-56,56)\times (-56,56)$. The optimality factor in this case is $\hat{\rho} = 1.5$. A glimpse of the aperiodic auto-AF and cross-AF of the DRCSs $\mathcal{C}$ can be seen in Fig. \ref{fig:ex1}. 
		\begin{figure}[htp]
			\centering
			\includegraphics[width=1\linewidth]{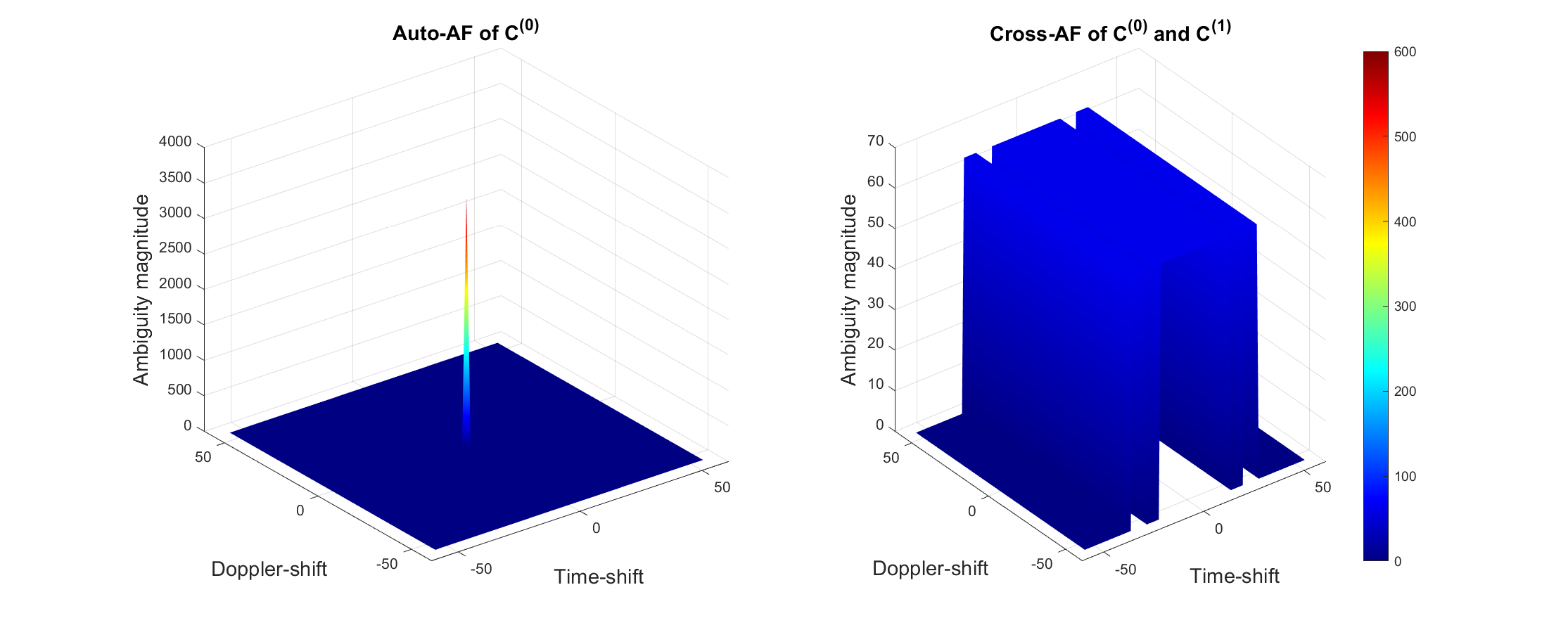}
			\caption{A glimpse of the aperiodic auto-AF and cross-AF of the sequence set $\mathcal{C}$ in Example \ref{ex1}.}
			\label{fig:ex1}
		\end{figure}
	\end{example}
	
	\subsection{ Discussion on Optimality}
	In this subsection, we examine the optimality of the proposed aperiodic DRCS set $\mathcal{C}$ mentioned
	in  Theorem \ref{thc2} and Corollary \ref{CT2}. 
	
	\begin{theorem} \label{T4}
		Let $\mathcal{C}$ be an aperiodic DRCS set with parameters  $\left(K, N, L, N, \Pi\right)$-DRCS constructed according to Theorem~\ref{thc2}, where $\Pi=(-L,L)\times (-L, L)$.
		Then, $\mathcal{C}$ is asymptotically optimal with respect to the bound in Lemma \ref{lem-wangbound} if $K > \frac{3N}{L} $, $\lim\limits_{L \rightarrow \infty}\frac{L}{N}=1$, and $\lim\limits_{L \rightarrow \infty}\frac{1}{K}=0$.
	\end{theorem}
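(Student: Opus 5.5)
The plan is to substitute the parameters of Theorem~\ref{thc2} into the optimality factor \eqref{aperiodicoptimal1} and take the limit. In the notation of Lemma~\ref{lem-wangbound}, the set $\mathcal{C}$ has $K$ DRCSs and flock size $M=N$. The sequences have length $L$, which plays the role of ``$N$'' in the lemma, and the zone is $Z_x=Z_y=L$. The achieved maximum is $\hat{\theta}_{\max}=N$, since the auto-AF sidelobes vanish and the cross-AF has magnitude at most $N$ on $\Pi$.

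The first step is to check that the hypotheses of Lemma~\ref{lem-wangbound} hold, so that the bound applies.
\begin{itemize}
\item The condition $K>\frac{3M}{Z_y}$ reads $K>\frac{3N}{L}$, which is assumed.
\item We need $L\sqrt{\frac{3N}{KL}}\le Z_x=L\le L$. This is equivalent to $\frac{3N}{KL}\le 1$, which again follows from $K>\frac{3N}{L}$.
\end{itemize}
Hence
\begin{align*}
\hat{\rho}=\frac{N}{\sqrt{NL\left(1-2\sqrt{\frac{N}{3KL}}\right)}}=\sqrt{\frac{N}{L}}\cdot\left(1-2\sqrt{\frac{N}{3KL}}\right)^{-1/2}.
\end{align*}

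The second step is to take the limit as $L\to\infty$.
\begin{itemize}
\item By assumption $\frac{L}{N}\to 1$, so $\sqrt{N/L}\to 1$.
\item Since $\frac{N}{L}$ is bounded and $\frac{1}{K}\to 0$, the quantity $\frac{N}{3KL}=\frac{1}{3}\cdot\frac{N}{L}\cdot\frac{1}{K}$ tends to $0$, so the second factor tends to $1$.
\end{itemize}
Therefore $\hat{\rho}\to 1$.

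The paper defines asymptotic optimality with $N\to\infty$. Because $L\le N$ and $L/N\to1$, the limits $L\to\infty$ and $N\to\infty$ are equivalent here, and I would note this explicitly.

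The only real subtlety is matching the notation correctly: the lemma's $M$ is our $N$ and the lemma's $N$ is our $L$. One also has to confirm that the $Z_x$ constraint is implied by $K>3N/L$ and not an extra assumption. The limit itself is routine.
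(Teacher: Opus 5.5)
Your proposal is correct and follows essentially the same route as the paper: substitute $M=N$, sequence length $L$, $Z_x=Z_y=L$ and $\hat{\theta}_{\max}=N$ into the optimality factor, then let $L\to\infty$ using $L/N\to 1$ and $1/K\to 0$. You also check explicitly that the $Z_x$ constraint of Lemma~\ref{lem-wangbound} follows from $K>3N/L$, and that the limits $L\to\infty$ and $N\to\infty$ agree; these are small refinements that the paper leaves implicit.
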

	
	\begin{proof}
		Based on Theorem \ref{thc2}, the set \( \mathcal{C} \) is an aperiodic \((K,N,L,N,\Pi)\)-DRCS sequence set with \(\Pi = (-L, L) \times (-L, L)\). The optimality factor \( \hat{\rho} \) is given by
		\[
		\hat{\rho} = \frac{N}{\sqrt{NL \left( 1 - 2 \sqrt{\frac{N}{3KL}} \right)}}
		\]
		since $K > \frac{3N}{L} $.
		
		When $\lim\limits_{L \rightarrow \infty}\frac{L}{N}=1$ and $\lim\limits_{L \rightarrow \infty}\frac{1}{K}=0$, we have 
		\[\lim\limits_{L \rightarrow \infty}\hat{\rho}=\lim\limits_{L \rightarrow \infty}\frac{N}{\sqrt{NL \left( 1 - 2 \sqrt{\frac{N}{3KL}} \right)}}=\lim\limits_{L \rightarrow \infty}\frac{1}{\sqrt{\frac{L}{N} \left( 1 - 2 \sqrt{\frac{N}{3KL}} \right)}} =1.\] 
		Thus, the aperiodic DRCS set \( \mathcal{C} \) is asymptotically optimal.	
	\end{proof}
	
	Next, we analyze the asymptotically optimal aperiodic DRCS set in Corollary \ref{CT2}.
	\begin{corollary}
		Let $\mathcal{C}$ be the aperiodic DRCS set in Corollary \ref{CT2}. Then
		\begin{enumerate}
			\item[i)] the DRCS set $\mathcal{C}$ of Corollary \ref{CT2} i), ii), and iii) are asymptotically optimal based on the bound given in Lemma \ref{lem-wangbound}, when $K\geq 4+c$, $q^n=N_1\pm k$, and  $\lim\limits_{N_1 \rightarrow \infty}\frac{1}{p_1}=0$, where $k$ is a constant and $p_1$ is the smallest prime factor of $N_1$.
			\item[ii)] the DRCS set $\mathcal{C}$ of Corollary \ref{CT2} IV) and V) are asymptotically optimal based on the bound given in Lemma \ref{lem-wangbound}, when $K\geq 4+c$, $p_1=p\pm k$, where $k$ is a constant.
		\end{enumerate}
	\end{corollary}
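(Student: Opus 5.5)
The plan is to reduce every case to Theorem~\ref{T4}. That theorem says a $(K,N,L,N,\Pi)$-DRCS set from Theorem~\ref{thc2} is asymptotically optimal provided three conditions hold: $K>3N/L$, $L/N\to 1$, and $1/K\to 0$. So for each family listed in Corollary~\ref{CT2} I will write down the explicit pair $(N,L)$, compute the ratio $L/N$, and check the three conditions under the stated hypotheses. The optimality factor is
\[
\hat{\rho}=\Bigl(\tfrac{L}{N}\bigl(1-2\sqrt{\tfrac{N}{3KL}}\bigr)\Bigr)^{-1/2},
\]
so the whole task comes down to controlling $L/N$ and $K$.

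For part i), the families are as follows.
\begin{itemize}
\item Case i) of Corollary~\ref{CT2} has $N=N_1p^n$, $L=N_1(p^n-c)$, so $L/N=1-c/p^n$.
\item Case ii) has $L/N=1-c/(p^n+1)$.
\item Case iii) has $L/N=(1-c/N_1)(1-1/p^n)$.
\end{itemize}
With $c$ fixed and $q^n=N_1\pm k$ for a constant $k$, sending $N_1\to\infty$ forces the prime power $p^n$ to infinity as well. Hence $L/N\to 1$ in all three cases. Since $K=\min\{p_1-1,p^n\}$ and $1/p_1\to 0$, we also get $K\to\infty$, i.e.\ $1/K\to 0$.

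The remaining condition is $K>3N/L$. Here $N/L=1/(1-c/p^n)$, which tends to $1$, so $3N/L<4$ for $N_1$ large enough. The hypothesis $K\ge 4+c$ then gives $K>3N/L$ eventually, and in fact for all admissible parameters once $p^n$ exceeds a small bound in terms of $c$. For case iii) I will argue the same way using $N_1-c\ge N_1/2$ or a similar crude estimate. This step is where the main care is needed: the statement's hypothesis $K\ge 4+c$ must be shown to dominate $3N/L=3/\bigl((1-c/N_1)(1-1/p^n)\bigr)$. I would first try the bound $(1-c/p^n)^{-1}\le 1+c/(p^n-c)$, and note that $p^n-c\ge 1$ always holds. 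This yields $3N/L\le 3+3c$, which may not beat $4+c$ in general. So I expect to fall back on the asymptotic version, namely that $K>3N/L$ holds for all sufficiently large $N_1$, which is all Theorem~\ref{T4} needs in the limit.

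For part ii), I will work with $N=p^np_1^{n_1}$ and $L=(p^n-1)(p_1^{n_1}-c)$ in case IV), and the analogous expressions with $p_1^{n_1}+1$ in case V). Then $L/N=(1-p^{-n})(1-c\,p_1^{-n_1})$, or the $+1$ variant. The hypothesis $p_1=p\pm k$ ties the two primes together, so both $p^n$ and $p_1^{n_1}$ tend to infinity along the limit. As a result $L/N\to 1$, and $K=\min\{p^n,p_1^{n_1}\}\to\infty$. The condition $K>3N/L$ follows exactly as in part i).

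Applying Theorem~\ref{T4} in each case then gives $\hat{\rho}\to 1$, which proves the corollary.
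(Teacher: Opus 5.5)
Your proposal is correct and follows the paper's route. You read off $(K,N,L)$ for each family in Corollary~\ref{CT2} and check the three hypotheses of Theorem~\ref{T4}, with $L/N\to 1$ and $1/K\to 0$ coming from $p^n=N_1\pm k$ and $1/p_1\to 0$ (resp.\ $p_1=p\pm k$).

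The only difference is in how you get $K>3N/L$. The paper derives it for every member directly from $K\ge 4+c$. The observation you missed is that $K\le p^n$, so in case i) $p^n-c\ge 4$ and $3N/L=3+3c/(p^n-c)\le 3+3c/4<4+c$. You instead establish $K>3N/L$ only for large parameters. That is all Theorem~\ref{T4} needs for the limit, and it even avoids the boundary case $c=0$, $p^n=4$ of Corollary~\ref{CT2} iii), where $K=4=3N/L$ and the paper's uniform claim fails.
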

	
	\begin{proof}
		i) By Corollary \ref{CT2} i), we have $K=\min\{p_1-1, p^{n}\}$, $L=N_1(p^{n}-c)$, and $N=N_1p^{n}$.  When \( K \geq 4+c \), it holds that
		\[
		K =\min\{p_1-1, p^{n}\}>3+\frac{3c}{p^{n}-c}=\frac{3p^{n}}{p^{n}-c}.
		\]
		
		Consequently, we obtain
		\[
		K =\min\{p_1-1, p^{n}\}> \frac{3 N_1p^{n}}{N_1(p^{n}-c)} = \frac{3N}{L}.
		\]
		
		It is obvious that
		\[
		\lim_{L \to \infty} \frac{L}{N} = \lim_{N_1(p^{n}-c) \to \infty} \frac{N_1(p^{n}-c)}{N_1p^{n}} = 1.
		\]
		
		Since \(p^n=N_1\pm k \) for some constant \( k \) and $\lim\limits_{N_1 \rightarrow \infty}\frac{1}{p_1}=0$, we have
		
		\[
		\lim_{L \to \infty} \frac{1}{K} = \lim_{N_1(N_1\pm k - c) \to \infty} \frac{1}{\min\{p_1-1, N_1\pm k\}} = 0.
		\]
		By Theorem~\ref{T4}, it follows that the DRCS set \( \mathcal{C} \) in Corollary \ref{CT2} i) is asymptotically optimal.
		
		For the DRCS sets of Corollary~\ref{CT2}~ii) and iii), the proof of asymptotic optimality is similar to that of the DRCS sets in Corollary~\ref{CT2}~i); hence omitted.
		
		ii) By Corollary~\ref{CT2} IV), we have \( K = \min\{p^n,p_1^{n_1}\} \), \( L = (p^{n} - 1)(p_1^{n_1} - c) \), and \( N = p^{n} p_1^{n_1} \). When \(K \geq 4+c \), it holds that
		\[
		K= \min\{p^n,p_1^{n_1}\} > \frac{3 p^{n} p_1^{n_1}}{(p^{n} - 1)(p_1^{n_1} - c)} = \frac{3N}{L}.
		\]
		
		It is obvious that
		\[
		\lim_{L \to \infty} \frac{L}{N} = \lim_{(p^{n} - 1)(p_1^{n_1} - c) \to \infty} \frac{(p^{n} - 1)(p_1^{n_1} - c)}{p^{n} p_1^{n_1}} = 1.
		\]
		
		Since \(p_1=p\pm k \) for constant \( k \), we have
		\[
		\lim_{L \to \infty} \frac{1}{K} = \lim_{(p^{n} - 1)((p \pm k)^{n_1} - c) \to \infty} \frac{1}{\min\{p^n,(p \pm k)^{n_1}\}} = 0.
		\]
		
		By Theorem~\ref{T4}, it follows that the DRCS set \( \mathcal{C} \) of Corollary \ref{CT2} IV) is asymptotically optimal.
		
		For the DRCS sets of Corollary~\ref{CT2}~V), the proof of asymptotic optimality is similar to that of the DRCS sets in Corollary~\ref{CT2}~IV); hence omitted.
	\end{proof}
	
	
	\begin{remark}
		According to the twin prime conjecture, we can select a subsequence consisting of a pair of twin prime numbers, which states that there are infinitely many primes \( p \) such that \( p + 2 \) is also prime \cite{zhang2014bounded}. Polignac proposed a more general conjecture that for every natural number \( r \), there are infinitely many primes \( p \) such that \( p + 2r \) is also a prime \cite{de1851recherches}. The case $r = 1$ of Polignac’s conjecture is the twin prime conjecture. In 2013, Zhang made great contributions in this regard, proving that there are infinitely many pairs of prime numbers that differ
		by 70 million or less \cite{zhang2014bounded}. Maynard reduced the
		difference from 70 million to 600\cite{maynard2015small}. Later, Tao reduced the
		difference from 70 million to 246 in his work on the Polymath
		Project \cite{polymath2014variants}. 
		Thus, infinitely many asymptotically optimal aperiodic DRCS sets can be obtained based on the twin prime conjecture.
	\end{remark}

	\section{Comparison with the Previous Works}
	In this work, we redefined the definition of quasi-Florentine rectangles in a general way. 
	Compared with the classical circular Florentine rectangle, Florentine rectangle, and quasi-Florentine rectangle, the generalized quasi-Florentine rectangle can achieve a significantly larger number of rows for a given $N$ when $N$ is not of the form $p$, $p-1$, $p^n$, or $p^n+1$, where $p$ is a prime. For instance, when $N = 160$, we have $\widetilde{F}(N)=1$ and $F(N) = F_{Q,1}(N) = 4$, whereas $F_{Q,25}(N) = 9$. Table~\ref{table_2} summarizes the values of $\widetilde{F}(N)$, $F(N)$, $F_{Q,1}(N)$, and $F_{Q,N-n}(N)$. It is evident that when $N$ does not take the form $p$, $p-1$, $p^n$, or $p^n+1$, the value of $F_{Q,N-n}(N)$ is strictly larger than both $F(N)$ and $F_{Q,1}(N)$.
	
	The newly introduced generalized quasi-Florentine rectangle structure can be employed to construct asymptotically optimal DRCS sets with more flexible parameters, which are not covered by the results in \cite{shen2024} and \cite{wang2025doppler}. 
	For example, the asymptotically optimal sets shown in Tables~\ref{2} and \ref{3} can only be obtained through our proposed construction to date. 
	In addition, our construction significantly improves the achievable set size 
	compared with the DRCS sets reported in \cite{shen2024} and \cite{wang2025doppler}. 
	To illustrate this improvement, in Table \ref{SA}, we give a few specific examples of the parameters of DRCS set over small alphabets constructed using quasi-Florentine rectangles and Butson-type Hadamard matrices, where the Butson-type Hadamard matrix is obtained through Table \ref{BH} and Lemma \ref{BH_lemma}.
    Tables~\ref{2} and~\ref{3} presented several specific parameters 
	for cases when $N$ is not of the form $p$, $p-1$, $p^n$, or $p^n+1$, 
	where $K_{prev}$ and $K_{{prev}_1}$ denote the maximum set sizes 
	of the DRCS reported in \cite{shen2024} and \cite{wang2025doppler}, respectively, and $\hat{\rho}_{{prev}_1}$ denotes the optimality factor of the $(K_{{prev}_1},N,N-1,N,\Pi)$-DRCS reported in \cite{wang2025doppler}, $\Pi=(-N+1,N-1)\times (-N+1,N-1)$. 
	In these cases, we observe that the set sizes of the DRCSs constructed from the generalized quasi-Florentine rectangles are consistently larger than those of the DRCSs reported in \cite{shen2024} and \cite{wang2025doppler} over the same alphabet. 
    
	\begin{table}[htp]
		\caption{Few Specific Examples of DRCS Set over Small Alphabets.}  
		\label{SA}
		\renewcommand{\arraystretch}{1.3} 
		\resizebox{\textwidth}{!}{
			\begin{tabular}{|c|c|c|c|c|c|c|c|c|c|c|}
				\hline \begin{tabular}{c} 
					Alphabet \\
				\end{tabular} & $K_{{prev}_1}$& $K$ &$N$&$L$& $Z_x$& $Z_y$& $\hat{\theta}_{\max}$& $\hat{\rho}$ &$\hat{\rho}_{{prev}_1}$&\begin{tabular}{c} 
					Butson-type \\
					Hadamard matrix
				\end{tabular} \\
				\hline \multirow{2}{*}{$\mathbb{Z}_3$} & $4$& $6$ &$63$ & $56$& $56$&$56$& $63$ &1.5000 & 1.5591 & $B H(63,3)$  \\
				\cline { 2 - 11 } &$4$ & $16$&$144$&$120$&$120$& $120$&$144$ &1.3248  & 1.5473&  $B H(144,3)$  \\
				\hline \multirow{2}{*}{$\mathbb{Z}_4$} & $4$&$6$&$56$&$49$&$49$&$49$ &$56$ &1.5179 &1.5618 & $B H(56,4)$  \\
				\cline { 2 - 11 } &$4$ &$64$&$5184$&$5040$&$5040$&$5040$ &$5184$&1.0977& 1.5384 & $B H(5184,4)$  \\
				\hline \multirow{2}{*}{$\mathbb{Z}_5$} &$4$ &$8$&$1000$&$868$&$868$& $868$&$1000$& 1.4320 & 1.5395&  $B H(1000,5)$ \\
				\cline { 2 - 11 } & $4$&$16$&$10000$&$9360$&$9360$&$9360$&$10000$&1.2340 &1.5383 &  $B H(10000,5)$  \\
				\hline \multirow{2}{*}{$\mathbb{Z}_6$} & $6$&$49$&$1323$&$1248$&$1248$ &$1248$&$1323$& 1.1300 & 1.3762 &  $B H(1323,6)$  \\
				\cline { 2 - 11 } &$4$ & $128$&$21632$&$21336$&$21336$&$21336$&$21632$&  1.0630 & 1.5382 & $B H(21632,6)$  \\
				\hline 
		\end{tabular}}
	\end{table}	
	\begin{table}[htp]
		\caption{Comparison the set size of DRCS, construction through Corollary \ref{CT2}, where $N$ is not the form $p$, $p-1$, $p^n$, $p^n+1$}
		\label{2}
		\renewcommand{\arraystretch}{1.3} 
		\begin{center}
			\begin{tabular}{|c|c|c|c|c|c|c|c|c|c|c|}
				\hline
				$K_{prev}$ & $K_{prev_1}$ &$K$& $N$ & $L$ &$Z_x$ & $Z_y$ & $\hat{\theta}_{\max}$    & Alphabet & $\hat{\rho}$ &$\hat{\rho}_{{prev}_1}$ \\
				\hline  2&     4&6 &63 &  56 &  56&  56& 63  &$\mathbb{Z}_{7*3^2}$ &  1.5000 &1.5591\\
				\hline   2&   4 &9 &99 & 88 & 88 &88 &  99&$\mathbb{Z}_{11*3^2}$ & 1.3788&1.5514\\
				\hline   1&  4 &12 & 208 & 195 & 195 &195& 208 & $\mathbb{Z}_{13*2^4}$&  1.2754  &1.5444\\
				\hline   1&  4 &16 & 304 & 285 & 285& 285& 304 & $\mathbb{Z}_{19*2^4}$&1.2328& 1.5425\\
				\hline   2& 4 &22 & 759 & 736 & 736& 736& 759  &$\mathbb{Z}_{23*(2^5+1)}$&  1.1726&1.5399\\
				\hline   4& 4 &25 & 925 & 888 & 888 &888& 925  &$\mathbb{Z}_{37*5^2}$& 1.1674 &1.5396\\
				\hline  2& 4 & 30 & 1023 & 992 & 992& 992& 1023 &$\mathbb{Z}_{31*(2^5+1)}$& 1.1455 &1.5395\\
				\hline   6& 6& 46 & 2303 & 2256 & 2256 &2256& 2303  &$\mathbb{Z}_{47*7^2}$& 1.1104& 1.3759\\
				\hline   4& 4 & 60 &3965 & 3904 & 3904 &3904& 3965  &$\mathbb{Z}_{61*(2^6+1)}$&   1.0932& 1.5385 \\
				\hline   2& 4 & 66 &5494 & 5427 & 5427 &5427& 5494  &$\mathbb{Z}_{67*(3^4+1)}$& 1.0869& 1.5384\\
				\hline
			\end{tabular}
		\end{center}
	\end{table}	
	
	\begin{table}[htp]
		\caption{Comparison the set size of DRCS, construction through Corollary \ref{CT2}, where $N$ is not the form $p$, $p-1$, $p^n$, $p^n+1$}
		\label{3}
		\renewcommand{\arraystretch}{1.3} 
		\begin{center}
			\begin{tabular}{|c|c|c|c|c|c|c|c|c|c|c|}
				\hline
				$K_{prev}$ & $K_{prev_1}$ &$K$& $N$ & $L$ &$Z_x$ & $Z_y$ & $\hat{\theta}_{\max}$    & Alphabet & $\hat{\rho}$ &$\hat{\rho}_{{prev}_1}$\\
				\hline 1& 4&9 &160 &  135 &  135&  135& 160  &$\mathbb{Z}_{2^4*(3^2+1)}$ &   1.4283 &1.5463\\
				\hline 2& 4 &25 & 675 & 624 & 624 &624 &  675&$\mathbb{Z}_{3^3*5^2}$ & 1.1932 &1.5401\\
				\hline 1& 4 &25 & 832 & 775 & 775 &775& 832 & $\mathbb{Z}_{2^5*(5^2+1)}$&  1.1880&1.5397 \\
				\hline 1& 4 &25 & 1274 & 1200 & 1200& 1200&1274 & $\mathbb{Z}_{7^2*(5^2+1)}$& 1.1803& 1.5392\\
				\hline 1&  4 &27 & 1350 & 1274 & 1274& 1274& 1350  &$\mathbb{Z}_{(7^2+1)*3^3}$& 1.1722&1.5391\\
				\hline 2&     4 &27 & 1755 & 1664 & 1664 &1664& 1755  &$\mathbb{Z}_{3^3*(2^6+1)}$& 1.1690& 1.5389\\
				\hline 2&    4 & 49 & 3969 & 3840 & 3840& 3840& 3969 &$\mathbb{Z}_{7^2*3^4}$& 1.1144& 1.5385\\
				\hline   1&  4& 64 & 5248 & 5103 & 5103 &5103& 5248  &$\mathbb{Z}_{2^6*(3^4+1)}$& 1.0976& 1.5384\\
				\hline  2&  4 & 81 &9801 & 9600 & 9600 &9600& 9801  &$\mathbb{Z}_{3^4*11^2}$&  1.0831& 1.5383 \\
				\hline   1& 4 & 121 &15246 & 15000 & 15000 &15000&15246  &$\mathbb{Z}_{11^2*(5^3+1)}$& 1.0662&1.5383 \\
				\hline
			\end{tabular}
		\end{center}
	\end{table}	 
	
	\section{Concluding Remarks}
	In this paper, we generalized the existing definition of quasi-Florentine rectangles, establishing them as a powerful combinatorial framework for constructing aperiodic DRCS sets. Based on this generalized quasi-Florentine rectangles, we developed aperiodic DRCS sets whose set sizes are substantially larger than those achievable using the Florentine or quasi-Florentine rectangles defined in \cite{Avik2024}. Moreover, the constructed DRCS sets are asymptotically optimal with respect to the known lower bounds for aperiodic DRCS sets. These findings not only demonstrate the effectiveness of the generalized quasi-Florentine rectangles in DRCS design but also highlight their potential for broader combinatorial and engineering applications. Exploring further applications and theoretical properties of these generalized structures remains an appealing avenue for future research.
	
	\bibliographystyle{ieeetr}
	\bibliography{DRCS_QFR.bib}

\end{document}